\documentclass[12pt,a4paper,fleqn]{article}

\usepackage{a4wide}
\usepackage{amsmath,amsthm}
\theoremstyle{plain}
\newtheorem{thm}{Theorem}[section]
\newtheorem{lem}[thm]{Lemma}

\theoremstyle{definition}
\newtheorem{defn}[thm]{Definition}

\newtheorem{nota}[thm]{Notation}
\newtheorem{remk}[thm]{Remark}

\usepackage{graphicx}
\usepackage[pagebackref=false,colorlinks,linkcolor=blue,citecolor=blue]{hyperref}
\usepackage{ifthen}
\usepackage{mathpartir}
\usepackage{natbib}
\usepackage{setspace}
\usepackage{stmaryrd}
\usepackage{url}

\usepackage[utf8]{inputenc}
\usepackage[T1]{fontenc}
\usepackage[tt=false, type1=true]{libertine}
\usepackage[libertine]{newtxmath}
\usepackage[sans]{dsfont}
\usepackage[mathscr]{eucal}

\newcommand{\agdadef}[3]{\textnormal{\href{\agdaroot#1.html\##2}{\texttt{#3}}}}
\newcommand{\agdalink}[1]{\textnormal{\href{\agdaroot#1.html}{\texttt{#1}}}}
\newcommand{\agdalinkalt}[2]{\textnormal{\href{\agdaroot#1.html}{\texttt{#2}}}}
\newcommand{\agdaroot}{https://amp12.github.io/IntensionalSetoids/html/}

\newcommand{\Atom}{\mathbb{A}}
\renewcommand{\c}{\mathtt{c}}
\newcommand{\C}{\mathscr{C}}
\newcommand{\cng}{\mathtt{cng}}
\newcommand{\coe}{\mathtt{coe}}
\newcommand{\coh}{\mathtt{coh}}
\newcommand{\centeredcirc}[1]{\vcenter{\hbox{$#1\circ$}}}
\newcommand{\smallcirc}{
  \mathbin{\mathchoice{
      \centeredcirc\scriptstyle}{
      \centeredcirc\scriptstyle}{
      \centeredcirc\scriptscriptstyle}{
      \centeredcirc\scriptscriptstyle}}}
\newcommand{\comp}{\smallcirc}

\newcommand{\cons}{\mathrel{;}}

\newcommand{\den}[1]{\llbracket#1\rrbracket}
\newcommand{\dencx}[1]{\llbracket#1\rrbracket}
\newcommand{\dentm}[1]{\llbracket#1\rrbracket}
\newcommand{\denty}[1]{\llparenthesis#1\rrparenthesis}
\newcommand{\dencxr}[1]{\mathop{\llbracket#1\rrbracket{\searrow}}}
\newcommand{\densbr}[1]{\mathop{\llbracket#1\rrbracket{\searrow}}}
\newcommand{\dentmr}[1]{\mathop{\llbracket#1\rrbracket{\searrow}}}
\newcommand{\dentyr}[1]{\mathop{\llparenthesis#1\rrparenthesis{\searrow}}}

\newcommand{\E}{\mathscr{E}}
\newcommand{\El}{\mathscr{E}\!\ell}
\newcommand{\Elem}{\mathscr{E}\!\mathit{lem}}
\newcommand{\Emp}{\mathtt{Emp}}
\newcommand{\Empt}{\mathit{Emp}}
\newcommand{\empt}{\mathit{emp}}

\newcommand{\ent}{\vdash}
\newcommand{\eq}{\sim}
\newcommand{\Eq}{\mathtt{Eq}}
\newcommand{\Eqt}{\mathit{Eq}}
\newcommand{\ETU}{\textnormal{\textsf{ETU}}}
\newcommand{\ETUapp}[4]{#1\cdot_{#2,#3}#4}
\newcommand{\ETUEmp}{\mathbf{Emp}}
\newcommand{\ETUECx}{\diamond}
\newcommand{\ETUemp}{\mathbf{emp}}
\newcommand{\ETUEq}{\mathbf{Eq}}
\newcommand{\ETUlam}{\lambda}
\newcommand{\ETUNat}{\mathbf{Nat}}
\newcommand{\ETUnrec}{\mathbf{nrec}}
\newcommand{\ETUPi}{\Pi}
\newcommand{\ETUrefl}{\mathbf{refl}}
\newcommand{\ETUsucc}{\mathbf{succ}}
\newcommand{\ETUU}{\mathbf{U}}
\newcommand{\ETUzero}{\mathbf{zero}}
\newcommand{\Fam}{\mathscr{F}\!\mathit{am}}
\newcommand{\fib}{\mathbin{'}}
\newcommand{\ffib}{\mathbin{''}}
\newcommand{\Fin}{\mathop{\mathtt{Fin}}}
\newcommand{\freshfor}{\mathrel{\#}}
\newcommand{\FsetA}{\mathtt{Fset}\Atom}

\newcommand{\fun}{\shortrightarrow}
\newcommand{\hcng}{\mathtt{hcng}}
\newcommand{\heq}{\approx}
\newcommand{\Hom}{\mathscr{H}\mathit{om}}
\newcommand{\hrfl}{\mathtt{hrfl}}
\newcommand{\hsym}{\mathtt{hsym}}
\newcommand{\htrs}{\mathtt{htrs}}

\newcommand{\IRU}{\textnormal{\textsf{IRU}}}

\newcommand{\Lft}{\mathtt{Lft}}

\newcommand{\morphism}{\rightarrow}
\newcommand{\Nat}{\mathtt{Nat}}

\newcommand{\NNO}{\mathds{N}}
\newcommand{\Ok}{\mathop{\mathtt{Ok}}}
\newcommand{\Pit}{\mathtt{Pi}}
\newcommand{\refl}{\mathtt{refl}}
\newcommand{\reflect}{\mathit{reflect}}
\newcommand{\rfl}{\mathtt{rfl}}
\newcommand{\rflt}{\mathit{refl}}
\newcommand{\Set}{\mathtt{Set}}
\newcommand{\Setd}{\mathtt{Setd}}
\newcommand{\SetOne}{\Set_1}
\newcommand{\Sigt}{\mathtt{Sigma}}

\newcommand{\soundTm}{\mathtt{sound}\den{\mathtt{tm}}}
\newcommand{\soundTy}{\mathtt{sound}\den{\mathtt{ty}}}

\newcommand{\sym}{\mathtt{sym}}
\newcommand{\Tm}[1]{\mathtt{Tm}[#1]}
\newcommand{\TM}{\mathtt{Tm}}
\newcommand{\totCx}{\mathtt{tot}\den{\mathtt{cx}}}
\newcommand{\totTm}{\mathtt{tot}\den{\mathtt{tm}}}
\newcommand{\totTy}{\mathtt{tot}\den{\mathtt{ty}}}
\newcommand{\trs}{\mathtt{trs}}

\renewcommand{\tt}{\mathtt{tt}}
\newcommand{\Ty}[1]{\mathtt{Ty}[#1]}
\newcommand{\TY}{\mathtt{Ty}}
\newcommand{\U}{\mathscr{U}}
\newcommand{\uip}{\mathit{uip}}
\newcommand{\Unit}{\mathtt{Unit}}
\newcommand{\univ}{\mathit{univ}}
\newcommand{\Univ}{\mathtt{Univ}}

\title{Setoids in Intensional Type Theory}

\author{Andrew M. Pitts\\
  University of Cambridge, UK}

\date{}

\begin{document}

\maketitle

\begin{abstract}
  We show that a certain notion of displayed setoid (family of
  setoids) in intensional type theory can be used to give a semantics
  for extensional type theory with universes (\ETU). Safe Agda serves
  as a machine-checkable formalization of intentional type theory
  augmented with a universe closed under inductive-recursive
  definitions (\IRU). The syntax of \ETU{} is defined in \IRU{} in a
  traditional extrinsic form, using a well-scoped locally nameless
  representation of its terms. Giving the semantics of \ETU{} in terms
  of displayed setoids is complicated by the very limited means of
  expression afforded by \IRU. As a corollary we obtain a proof within
  \IRU{} of the consistency of \ETU.
\end{abstract}

\section{Introduction}

This paper constructs a model of extensional type theory with
universes \citep{Martin-LoefP:inttt} that we call \ETU, within
Martin-L{\"o}f's intentional type theory~\citep{Martin-LoefP:intttp}
augmented with a universe closed under inductive-recursive
definitions~\citep{DybjerP:genfsi}. To be more precise, the type
theory in which the construction takes place, which in the paper will
be called \IRU, is that provided by the Agda interactive theorem
prover~\citep{agda} with options \verb"--safe" and
\verb"--without-K".\footnote{The first option switches off
  experimental and possibly inconsistent features; the second option
  rules out the use of uniqueness of identity proofs in dependent
  pattern matching~\citep{CockxJ:patmwk}. To keep within the simplest
  possible type theory, we also avoid by hand the use of Agda's
  coinductive types and its annotations for proof irrelevance, even
  though Agda's safe option allows both.} We give the syntax of \ETU{}
in a traditional extrinsic form: an inductive definition of its
expressions and then inductive definitions of its well-formed
judgements. The use of a well-scoped locally nameless representation
of syntax~\citep{PittsAM:welsln} allows the inductive definition of
judgements to be quite close to informal, text book definitions. The
model construction makes use of a notion of
\emph{displayed}\footnote{We follow~\citet{AhrensB:disc} and say
  ``displayed structures'' rather than ``family of structures''.}
setoids that is simpler than many that occur in the literature (and
which has also been used in recent work by \citet{PujetL:indus}).  The
semantics of \ETU{} in the model are complicated by the very limited
means of expression afforded by \IRU; the use of machine-assisted
proof was essential for getting both workable definitions and correct
proofs of their properties. As a corollary we obtain a machine-checked
proof within safe Agda of the consistency of \ETU.

Reading the above paragraph, the knowledgeable reader may be surprised
that the literature does not already contain such a setoid model. The
author needed it for other results about the meta-theory of type
theories (namely to try to obtain an intensional version of the
constructions of~\citet{CoquandT:canndt}) and was surprised not to find
it. The closest approach is the posthumously published draft paper by
\citet{PalmgrenE:typtsb} (which is also accompanied by an Agda
formalization). In a Foreward to that paper its editors and reviewer
assert that it
\begin{quote}
  ``provides a formalized interpretation of the extensional version of
  Martin-Löf type theory in an intensional version and \ldots such an
  interpretation has foundational significance.''
\end{quote}
I agree with the second part of this assertion, but Palmgren's
Introduction immediately makes clear that the first part needs
qualifying: the setoid constructions he gives in Agda are shown to be
closed under certain operations that might be used to give the
semantic functions on the expressions of extensional type theory and
to prove that its derivable judgements are satisfied by the model; but
the semantic functions and proof of soundness are not actually
given. Since one wants a semantics of the well-formed expressions of
type theory, rather than a semantics of derivations of
well-formedness\footnote{Recall that there can be many different
  proofs of well-formedness of any particular expression in context,
  and that for extensional type theory the existence of such proofs is
  in general undecidable~\citep[section~3.2.2]{HofmannM:extcit}.}, the
common approach, going back to \citet{StreicherT:semtt}, is to define
\emph{partial} semantic functions and then prove that they are totally
defined on expressions for which there is a proof of
well-formedness. As \citet[section~3.5]{HofmannM:synsdt} remarks, the
details ``although essentially straightforward present surprising
technical difficulties'' and are very often left to the reader.  Very
few machine-checked formalizations exist in the literature; one occurs
as part of the proof of an Initiality Conjecture by Brunerie and
\citet{deBoerM:profic} within Agda augmented with postulated quotients
and propositional extensionality. Here we surmount some ``surprising
technical difficulties'' and are able to define a Streicher-inspired
semantics of \ETU{} within the very bare-bones system of \IRU, dealing
with partial functions via their type-valued graphs and so having to
prove single-valuedness as well as totality. Intensionality causes the
difficulties: just as there can be several constructs in constructive
mathematics that are equivalent classically and it can be tricky to
choose the right one for a given purpose, so here, there can be
several intensional constructs that are equivalent extensionally and
experimentation (with machine assistance) is needed to find the right
one. All of which is to say that the semantics of \ETU{} in \IRU{}
that we give in section~\ref{sec:semetu} was hard won.

Our setoid model of \ETU{} differs from that implicit in
\citet{PalmgrenE:typtsb} in two respects, even though both use
type-valued equivalence relations (``proof-relevant'' setoids), rather
than ones valued in some type of (possibly definitionally
proof-irrelevant) propositions. First and most importantly, the notion
of displayed setoid that we use (see Definition~\ref{def:dsetd}) is much
simpler than Definition~3.8 of \citet{PalmgrenE:typtsb}. Secondly, we
deal differently with setoid universes. Palmgren indirects through set
theory, constructing a large setoid of constructive iterative sets in
the sense of \citet{AczelP:typtic-cp} that supports a set-theoretic
semantics of \ETU. Instead we used the more direct approach of
Altenkirch~\citep{AltenkirchT:exteit,AltenkirchT:conusm}, adapted from
the setting of equivalence relations valued in a universe of
definitionally proof-irrelevant propositions to the simpler setting of
type-valued equivalence relations in \IRU. As for Palmgren, our setoid
universes in \IRU{} (a countable hierarchy of them) are given by
inductive-recursive definitions (and an outer recursion over the
natural numbers for the hierarchy). \citet{AltenkirchT:conusm} show
for their version of displayed setoids how to replace the use of a
universe closed under induction-recursion by a higher-level universe
just closed under induction. \citet{PujetL:indus} goes further and
shows how to do that for a similar notion of setoid universe to the
one we use here. Nevertheless, for the sake of conceptual simplicity
(and because dealing with a countable hierarchy of universes
complicates matters), here we do not try to eliminate the use of
induction-recursion.

The motivation for the earliest work on setoids in intensional
dependent type theory by \citet{HofmannM:extcit} and
\citet{AltenkirchT:exteit} is somewhat different from here. There the
emphasis is on constructing within intensional type theory models of
extensionality principles (such as for function types) whose algebraic
properties hold as much as possible up to definitional equality rather
than just propositional equality or setoid equality (the latter being
ultimately all we care about in this paper).  The aim is to derive
from such models new type theories extending intensional type theory
with extensionality principles without disturbing its good
computational properties, such as canonicity and having decidable
judgements; see for example
\citep{AltenkirchT:obsen,AltenkirchT:settts,SterlingJ:cublbs}.

Extensional type theory does not have those good computational
properties, but it does occur naturally as the basis of internal type
theories of many useful mathematical structures, such as toposes; and
it bears a close relationship to intensional type theory extended with
postulated function extensionality and uniqueness of identity
proofs~\citep{WinterhalterT:elirtt,KapulkinK:extcit}.  Even if the
model of \ETU{} in \IRU{} presented here may not have direct
computational applications, we believe that it is of foundational
interest. Although intentional type theory is logically quite weak, it
can be used, and indeed is used in practice, to represent the
\emph{syntax} of various languages from logic and computer science;
but it is not completely clear to what extent \emph{semantic}
constructions for those languages, relying on extensionality, can be
carried out within it. This paper clarifies the situation and as such
it follows in a line of work on relative consistency of extensionality
in intensional systems that began with \citet{TakeuitG:genlc} and
\citet{GandyR:axieI} (or even Turing, according
to~\citet{HylandJME:fort}), continued with \citet{HofmannM:extcit} and
\citet{AltenkirchT:exteit} and now includes the homotopical view of
equality in type theory~\citep{BarrasB:gentgi}.

\subsection*{Outline}

The Agda code described in this paper is available at
\citep{setitt-agda} and we give links like \agdalinkalt{README}{this}
to relevant parts of it (converted to html) throughout this paper.
Section~\ref{sec:itu} gives a brief summary of the limited subset of
Agda type theory that we use, called \IRU; we aim to make the paper as
accessible as possible to a reader familiar with that type theory, but
not familiar with Agda's concrete implementation of
it. Section~\ref{sec:diss} explains the notions of setoid and
displayed setoid that we use. Section~\ref{sec:setu} gives a countable
hierarchy of inductive-recursively defined universes of codes for such
setoids, closed under taking dependent products, extensional equality
types and containing empty and natural number types; at the top of the
hierarchy there is a universe of codes for typing contexts.
Section~\ref{sec:extu} recalls extensional type theory and discusses
the terms and the judgements of the version we use here, called \ETU.
Section~\ref{sec:semetu} uses the setoid universes to give a semantics
in \IRU{} for \ETU{} and to prove its consistency
(Theorem~\ref{thm:consis}). The fully formalized intensional semantics
of extensional type theory with universes is the main
contribution of this paper, but it also highlights the usefulness of
the notion of displayed type-valued setoid in intensional type theory
that we use for the semantics.

\section{Safe Agda (\IRU)}
\label{sec:itu}

The Intentional Type Theory of the title is that provided by the Agda
interactive theorem prover~\citep{agda}. Although the latter has proved
to be a useful test bed for all sorts of ideas in type theory, 
here we use its \verb+--safe+ option (as well as \verb+--without-K+,
see below) to switch off experimental features
and use the following core type theory, that we call \IRU:
  
\paragraph*{Dependent function types} These are written
$(x:A)\fun B\,x$ in Agda rather than $\prod_{x:A} B\,x$ and satisfy
$\eta$-conversion in addition to $\beta$-conversion.

\paragraph*{Dependent record types}
We use the notation $\sum[x \in A ] B\, x$ for dependent product types
($\Sigma$-types) that classify pairs $(a,b)$ where $a : A$ and
$b : B\,a$. The accompanying projection functions are written
\begin{gather*}
  \pi_1 : (z :\textstyle\sum[x \in A ] B\, x) \fun A\\
  \pi_2: (z :\textstyle\sum[x \in A ] B\, x) \fun B(\pi_1\,z)
\end{gather*}
The non-dependent special case is written using the usual notation for
cartesian product, $A \times B$.  These notations are ones we have
chosen (see \agdalink{Prelude.Product}) rather than being fixed by
Agda, because dependent products are instances of user-definable
dependent record types~\citep{PollackR:deptrt}. These provide a single
constructor and named and typed fields for projecting out of the
record type, all declared by the user. Record types satisfy
$\eta$-conversion\footnote{More precisely, inductive record types do,
  by default. We avoid the other kind of recursive
  record type, by never using the \texttt{coinductive}
  directive.}  (surjective pairing in the particular case of dependent
product types).

\paragraph*{Data types} An Agda user can specify various kinds of
inductive data by declaring the names and the types of functions that
construct terms of the data type. Elimination of values of such types
is achieved very conveniently by defining functions of the data whose
arguments are specified using (dependent) patterns. The original form
of dependent pattern-matching by~\citet{CoquandT:patmdt} was too
liberal (it implies uniqueness of identity proofs, or equivalently,
the Axiom~K of \citet{SteicherT:invitt}), but the option
\verb+--without-K+ implements a stricter form~\citep{CockxJ:patmwk}
equivalent to having the usual eliminators and computation rule for
data in Martin-L\"of type theory~\citep{GoguenH:elidpm}. Agda's scheme
for making data type declarations is very (perhaps too) liberal: as
well as parameterized, indexed and simultaneous definitions, it
permits induction-recursion~\citep{DybjerP:genfsi}, where a type is
inductively defined simultaneously with recursively defined functions
on it (possibly valued in a universe type), and we make use of this.

\paragraph*{Universes} Agda contains universe types, written
$\Set\,\ell$ (where $\ell$ ranges over expressions for universe
levels). They are ``Russellian'' universes, in the sense that there is
no notational distinction between a term of type $\Set\,\ell$ and the
type that it denotes. Universes form a hierarchy in which $\Set\,\ell$
is an element of the universe at the next higher level; but they are
not cumulative (terms of one universe are not literally terms of
higher-level universes). Instead, typing constructs potentially
increase levels; for example, if $A:\Set\,\ell$ and $B$ maps each
$x:A$ to a type in universe $\Set\,\ell'$, then the dependent product
$(x:A)\fun B\,x$ lives in the universe whose level is the least upper
bound of $\ell$ and $\ell'$. Universes are similarly closed under
forming record types, inductive data types and inductive-recursive
types. Because the latter are so powerful, we only need the least
universe in the hierarchy, denoted $\Set$, and the next one above it,
which is denoted $\SetOne$ and used here just to classify collections
of types in $\Set$ (such as the record type $\Setd$ of all setoids in
$\Set$). Thus we largely avoid the ``level yoga'' of
\citet{PalmgrenE:typtsb}, needed to maintain predicativity when
considering setoids whose underlying type is in one universe and whose
equivalence relation is valued in another.

\begin{nota}
  Agda permits mixfix operations with ``$\_$'' used to indicate
  argument positions; for example in Definition~\ref{def:setd} below,
  $A\ni\_{\eq}\_$ is a function of two arguments.
  
  A dependent function type $(x:A)\fun B\,x$ can be written
  $\forall x \fun B \,x$ when the type $A$ is inferable from the
  context. Arguments to functions can be declared to be implicit using
  the syntax $\{x:A\}\fun B\,x$; thus if $f: \{x:A\}\fun B\,x$ and
  $a:A$, then the application $f\{a\}$ has type $B\,a$ can can be
  written just as $f$ with $a$ omitted, provided that there is enough
  contextual information for the Agda system to infer it. Iterated
  dependent function types like $(x:A) \fun((y: B\,x) \fun C\,x\,y)$
  can be written using telescopes, $(x:A)(y : B\,x) \fun C\,x\, y$. A
  non-dependent function type is just written as $A\fun B$.

  We write $\emptyset$ for the empty type in $\Set$, a datatype with
  no constructors; see \agdalink{Prelude.Empty}. The negation of a
  type, $\neg A$, is an abbreviation for the function type
  $A \fun\emptyset$.

  We write $\top$ for a one-element type in $\Set$, a record type with
  no fields whose unique constructor is written $\tt$; see
  \agdalink{Prelude.Unit}.

  We write $\NNO$ for the type of natural numbers in $\Set$, a
  datatype with constructors $0:\NNO$ and $1{+}\_: \NNO\fun\NNO$; see
  \agdalink{Prelude.Nat}.

  We write $a\equiv a'$ for the \emph{identity type} associated with
  terms $a$ and $a'$ of the same (implicit) type. It is a data type
  that has a unique constructor, called $\refl$, just when $a$ and
  $a'$ are definitionally equal (i.e.~equal according to the
  rules for equality judgements); see
  \agdalink{Prelude.Identity}. Given $a,a':A$, if there is a term of
  type $a\equiv a'$ we say that $a$ and $a'$ are \emph{propositionally
    equal} terms of type $A$. Identity types play an important role
  even though typing constructs in general do not behave extensionally
  with respect to them. In \IRU, propositional equality is in general
  a weaker relation than definitional equality, whereas in extensional
  type theory the two notions coincide by virtue of equality
  reflection (see Figure~\ref{fig:rulee}).
\end{nota}
  
\section{Displayed Setoids} 
\label{sec:diss}

In set theory the word ``setoid'' has an unambiguous meaning: it is a
set equipped with an equivalence relation and is sometimes called a
``Bishop set'' in a constructive setting, because of its use
by~\citet{BishopE:fouca}. But in type theory there are different things
that have been labelled ``setoid'', depending upon what kind of values
the equivalence relation takes: types, or pro\-positions (impredicative
or predicative, strict or non-strict).\footnote{Sometimes
  \emph{partial} equivalence relations (symmetric and transitive, but
  not necessarily reflexive relations) are used, for example by
  \citet{HofmannM:extcit}; but calling such things ``setoids'' rather
  than ``PERs'' is thankfully rare.} Here we take the simplest
possible approach in \IRU, avoiding any notion of proposition (even
the ``mere'' propositions of the~\citet[Definition~3.3.1]{HoTT}) and just
consider equivalence relations that are valued in $\Set$ (defined on
types that are also in $\Set$).

\begin{defn}[\agdalink{Setoid.Definition}]
  \label{def:setd}
  A \emph{setoid} $A$ in \IRU{} is a type $|A|:\Set$ equipped with an
  ITU \emph{equivalence relation}, which is by definition a function
  \begin{equation}\label{eq:1}
    A\ni\_{\eq}\_ : |A|\fun |A| \fun \Set 
  \end{equation}
  together with terms:
  \begin{align}
    \rfl\,A &: \forall x \fun (A\ni x \eq x) \label{eq:2}\\
    \sym\,A &: \forall x\,y\fun (A\ni x \eq y) \fun (A\ni y \eq
    x) \label{eq:3}\\ 
    \trs\,A &: \forall x\,y\,z \fun (A\ni x \eq y) \fun (A\ni y \eq z)
    \fun (A\ni x \eq z) \label{eq:4}
  \end{align}
  $\Setd$ denotes the record type in the universe $\SetOne$ of all
  such setoids. Given $A, A' : \Setd$, a \emph{morphism of setoids $f$
    from $A$ to $A'$} is given by a function
  \begin{equation}\label{eq:5}
    |f|: |A|\fun |A'|
  \end{equation}
  together with a term
  \begin{equation}\label{eq:6}
    \cng\,f : \forall x\,y\fun (A\ni x \eq y) \fun (A'\ni |f|\,x
    \eq |f|\,y)
  \end{equation}
  The record type in $\Set$ of all morphisms from $A$ to $A'$ is the
  underlying type of a setoid denoted $A\morphism A'$, whose equivalence
  relation is defined by
  \begin{equation}
    \label{eq:7}
    ({A\fun A'}\ni f \eq g) = \forall x \fun (A' \ni |f|\,x \eq |g|\,x)
  \end{equation}
  Note that the terms of types \eqref{eq:2}--\eqref{eq:4} and
  \eqref{eq:6} that are part of the structure of setoids and their
  morphisms are not required to satisfy any coherence properties (such
  as group\-oid laws or functor­iality); from the point of view of
  homotopy type theory~\citep{HoTT} one might call the elements of
  $\Setd$ ``wild'' setoids~\citep{CapriottiP:unihcc}.
\end{defn}

Setoid morphisms can be composed in an obvious way, composition is
associative up to definitional equality and has identity morphisms
that are unitary up to definitional equality; see
\agdadef{Setoid.Definition}{HomProperties}{HomProperties}.

In order to model dependent types using setoids in \IRU, given
$A:\Setd$ we need a notion of ``setoid displayed over $A$'' (also
known as ``$A$-indexed family of setoids''). At the very least such a
displayed setoid $B$ should have an underlying displayed type
$\| B \| : | A | \fun \Set$ over the underlying type
$|A|$. Furthermore, we need to have a comprehension operation turning
$B$ displayed over $A$ into an ordinary setoid $A\ltimes B$ equipped
with a morphism from $A \ltimes B$ to $A$ (and other things); and it
is reasonable to expect that this lies over
$\pi_1:(\sum[x \in |A|]\,\|B\|x) \fun |A|$. In which case, to make
$A\ltimes B$ into a setoid we need to give an equivalence relation $R$
on $\sum[x \in |A|]\, \|B\|x$ that is mapped by $\pi_1$ to
$A\ni\_{\eq}\_\,$.  It is tempting to assume that for each
$(x,y),(x', y'): \sum[x \in |A|]\, \|B\|x$ the type
$R\,(x,y)\,(x', y') : \Set$ takes the form of a dependent product
\begin{equation}
  \label{eq:8}
  R\,(x,y)\,(x', y') = \textstyle\sum[p \in (A\ni x \eq x')]\, R_p\,
  y\,y' 
\end{equation}
where $R_p$ relates $y: \|B\|x$ and $y':\|B\|x'$, \emph{dependent upon
  a proof} $p: (A\ni x \eq x')$, in other words a kind of
``path-over'' heterogeneous equality familiar from homotopy type
theory~\citep[section~6.2]{HoTT}. But to fit with the lack of any
coherence assumptions in the definition of setoid, one has to ensure
that $R_p$ is proof irrelevant, that is, does not really depend upon
$p$. One way of doing that is to make this proof irrelevance an
explicit assumption; \citet{PalmgrenE:typtsb} does this in his
Definition~3.8, also giving a number of other coherence conditions
that altogether yield what from our point of view is an unnecessarily
complicated notion of displayed setoid. Another way is to assume a
richer type theory than \IRU, namely one with a universe of
definitionally proof-irrelevant propositions (called \verb+Prop+ by
\citet{agda} and \verb+SProp+ by \citet{rocq}), and to use a notion of
setoid whose equivalence relation component takes its values in that
universe of propositions; for then $R_p$ and $R_{p'}$ are
definitionally equal because $p$ and $p'$ are. This is the route taken
by \citet{AltenkirchT:exteit} and the work that has flowed from
that. Here we will see that it is possible to stick with the very
simple type theory \IRU{} and its ``wild'' setoids, by replacing the
dependent product in \eqref{eq:8} with a cartesian product:
\[
  R\,(x,y)\,(x', y') = (A\ni x \eq x') \times (B\ni{x,y}\heq{x', y'})
\]
where $B\ni\_{\heq}\_$ is a function of type
$(\sum[x \in |A|]\, \|B\|x) \fun (\sum[x \in |A|]\, \|B\|x) \fun\Set$
with properties to be determined. This apparently rather na{\"i}ve way
of dealing with the needed proof-irrelevance has been used recently by
\citet{PujetL:indus}. In fact it already occurs in \citet[Definition
5.3.4]{HofmannM:extcit}, unremarked and as part of a more complicated
model construction that is driven by the desire for computational
content mentioned in the Introduction (and which ``has other
drawbacks''~\citep[page~218]{HofmannM:extcit}, causing it to be
abandoned by its author in future work).

There is explicitly no dependence of $B\ni{x,y}\heq{x', y'}$ on proofs
of $A\ni x \eq x'$ and unlike heterogeneous equality (the ``John
Major'' equality of \citet[section~5.1.3]{McBrideC:deptfp}), in
general there will be no way of constructing a term of type
$A\ni x \eq x'$ from a term of type $B\ni{x,y}\heq{x', y'}$. As a
result, one has to be careful that $B\ni{x,y}\heq{x', y'}$ is only
used in contexts where one already has $A\ni x \eq x'$; the properties
of symmetry~\eqref{eq:12} and transitivity~\eqref{eq:13} given below
are examples of how this can lead to slightly more complicated
definitions than one might first guess ($\hsym\,B$ takes an extra
argument of type $A\ni x\eq x'$ and $\htrs\,B$ takes extra arguments
of types $A\ni x\eq x'$ and $A\ni x'\eq x''$). As well as reflexivity,
symmetry and transitivity, we also have to ensure that the notion of
displayed setoid can model the crucial conversion rule of dependent
type theory
\begin{equation}
  \label{eq:144}
  (\Gamma\ent t:T) \fun (\Gamma\ent T = T') \fun (\Gamma\ent t:
  T')
\end{equation}
This is the role of \eqref{eq:14} and \eqref{eq:15} in the
following definition (we have adopted the names $\coe$ and $\coh$
from \citet[section~3.1]{AltenkirchT:conusm}).

\begin{defn}[\agdalink{Setoid.Display}]
  \label{def:dsetd}
  Given a setoid $A: \Setd$, a \emph{setoid $B$ displayed over $A$} is
  a function $\| B \| : |A| \fun \Set$ equipped with an ITU
  \emph{heterogeneous equivalence relation} over $A\ni\_{\eq}\_$,
  which is by definition a function
  \begin{equation}
    (B\ni\_{\heq}\_) : (\textstyle\sum[x \in |A|]\, \|B\|x) \fun
    (\textstyle\sum[x \in |A|]\, \|B\|x) \fun\Set \label{eq:10}
  \end{equation}
  together with terms
  \begin{align}
    \hrfl\,B
    &: \forall x\,y\fun (B\ni x,y \heq x,y)\label{eq:11}\\
    \hsym\,B
    &:
    \begin{array}[t]{@{}l}
      \forall\{x\,x'\,y\,y'\} \fun (A \ni x \eq x') \fun{}\\
      (B \ni x,y \heq x',y') \fun (B\ni x',y'\heq x,y)
    \end{array}\label{eq:12}\\
    \htrs\,B
    &:
    \begin{array}[t]{@{}l}
      \forall\{ x\,x'\,x''\,y\,y'\,y''\} \fun (A \ni x \eq x') \fun
      (A \ni x' \eq x'') \fun {}\\
      (B \ni x,y \heq x',y') \fun (B \ni
            x',y' \heq x'',y'') \fun (B \ni x,y \heq
      x'',y'')
    \end{array}\label{eq:13}\\
    \coe\,B
    &: \forall\{x\,x'\} \fun (A \ni x \eq x') \fun \|B \| x \fun \| B
      \| x' \label{eq:14}\\
    \coh\,B
    &: \forall\{x\,x'\} \fun (e : A \ni x \eq x')(y : \|B \|) \fun
      (B\ni x,y \heq x', \coe\,B\,e\,y)\label{eq:15}  
  \end{align}
  We write $\Setd[ A ]$ for the record type in $\SetOne$ of setoids
  displayed over $A$. A \emph{section of the
    displayed setoid $B$ over $A$} called $f$ is given by a dependently
  typed function
  \begin{equation}
    \label{eq:16}
    \| f \| : (x : | A |) \fun \| B \| x
  \end{equation}
  together with a term
  \begin{equation}
    \label{eq:17}
    \hcng\, f : \forall x\, x' \fun (A \ni x \eq x') \fun (B \ni x ,\|
    f\|x \heq x' , \| f \| x')
  \end{equation}
  We write $\Setd[ A \Vdash B ]$ for the record type in $\Set$ of all
  sections of  $B$ over $A$. 
\end{defn}

\begin{remk}[\textbf{Fibres of a displayed setoid}]
  \label{rem:fibds}
  If $A: \Setd$ and $B:\Setd[ A]$, then for each $x:A$
  there is a setoid $B\fib x$ with
  \begin{gather}
    | B\fib x | = \| B \|\,x\label{eq:20}\\
    ({B\fib x}\ni y_1 \eq y_2) = B\ni x,y_1 \heq x,y_2\label{eq:21}
  \end{gather}
  If $e : (A \ni x_1 \eq x_2)$ then we get a setoid morphism
  $B\mathrel{''}e : | B\fib x_1 \morphism B\fib x_2 |$
  with
  \begin{equation}
    \label{eq:23}
    |B\ffib e| = \coe\,B\,e
  \end{equation}
  (using $\rfl\,A$, $\sym\,A$, $\hsym\,B$, $\htrs\,B$, $\coe\,B$, and
  $\coh\,B$ to prove the congruence property~\eqref{eq:6}); and this
  morphism is independent of the proof of equality of $x_1$ and $x_2$
  in $A$, in the sense that if $e_1,e_2: (A \ni x_1 \eq x_2)$ then
  there is a term of type
  $({B\fib x_1}\morphism{B\fib x_2}) \ni B\ffib e_1
  \eq B\ffib e_2$. For the details see
  \agdadef{Setoid.Display}{Fibres}{Fibres}.
\end{remk}

\begin{defn}[\textbf{Re-indexing}]
  \label{def:rei}
  Displayed
  setoids and their sections can be \emph{re-indexed} along setoid
  morphisms: if $A,A': \Setd$ and $f : | A' \morphism A |$, then
  for each $B:\Setd[ A]$ there is a displayed setoid $f \ast B : \Setd[
  A' ]$ with
  \begin{equation}
    \label{eq:126}
    \begin{aligned}
      &\| f \ast B \| = \| B \| \comp |f|\\
      &({f \ast B}\ni x,y\heq x', y') = (B\ni |f|\,x, y \heq
      |f|\,x', y')
    \end{aligned}
  \end{equation}
  and for $b: \Setd[ A \Vdash B ]$ there is a section\footnote{We
    write $f \ast_B b$ rather than $f\ast b$ because the latter
    notation is too ambiguous for Agda to be able to infer in general
    which display $B$ is involved.}
  $f \ast_B b : \Setd[ A' \Vdash f \ast B ]$ with
  \begin{equation}
    \label{eq:127}
    \| f \ast_B b \| = \| b \| \comp \|f\|
  \end{equation}
  See \agdadef{Setoid.Display}{ReIndex}{ReIndex} for the details,
  including the fact that re-indexing is associative and unitary up to
  definitional equality.
\end{defn}

\begin{defn}[\textbf{Comprehension and product}]
  \label{def:com}
  Given $A:\Setd$, there is a \emph{comprehension} operation that turns
  a displayed setoid $B:\Setd[ A ]$ into a setoid $A \ltimes B :
  \Setd$ with
  \begin{equation}
    \label{eq:18}
    \begin{aligned}
    &|A \ltimes B | = \textstyle\sum[ x \in | A |]\, \| B \|
    x\\
    &({A \ltimes B} \ni (x, y) \eq (x' , y')) =
    (A \ni x \eq x') \times (B\ni x,y\heq x',y')
  \end{aligned}
  \end{equation}
  See \agdadef{Setoid.Display}{Comprehension}{Comprehension} for more
  details, including the projection and pairing operations associated
  with comprehension. We will also need the simpler, non-dependent
  notion of the product of two setoids $A,A':\Setd$, which we denote
  by $A\otimes A'$. This satisfies
  \begin{equation}
    \label{eq:19}
    \begin{aligned}
      &|A \otimes A'| = |A| \times |A'|\\
      &(A \otimes A' \ni (x, y) \eq (x' , y')) =
    (A \ni x \eq x') \times (A'\ni y\eq y')
    \end{aligned}
  \end{equation}
  See \agdalink{Setoid.Definition}.
\end{defn}

Altogether setoids, displayed setoids and their sections have the
structure of a \emph{category with families} (CwF) of
\citet{DybjerP:intt} in a very strict sense: all the necessary
identities are satisfied up to definitional equality in Agda. However
this is not the CwF we will use to model extensional type theory, for
two reasons. First, it does not provide a model all the relevant
typing constructs up to definitional equality. For example, although
the structure needed to interpret $\Pi$-types can be given, it only
satisfies the $\eta$-rule up to the equivalence given by an
appropriate setoid; and something similar is true for identity types,
whose computation rule only holds up to the appropriate setoid
equivalence relation.\footnote{This is true even for the CwFs
  constructed using definitionally proof irrelevant propositions
  by\citet[page~9]{AltenkirchT:conusm}; recent work of
  \citet{PujetL:revhsm} shows how to overcome such limitations.}
Secondly, we wish to use the universe-oriented version of CwFs of
\citet[section~2.3]{CoquandT:canndt}\footnote{More precisely, a
  non-cumulative variant of that.}, since that makes it easier to give
the semantics of a countable hierarchy of universes of types. That
involves changing the notion of ``family'' from just being a display:
we need a setoid universe at level $n+1$ that contains a code for the
universe at level $n$ in such a way that families of (codes for)
level-$n$ setoids coincide up to definitional equality with sections
of the level-$n$ universe code. We show how to do this in
Section~\ref{sec:fame}.

\section{Setoid Universes}
\label{sec:setu}

The type $\Setd$ from Defintion~\ref{def:setd} can itself be made into
a large setoid, that is, endowed with a type-valued equivalence
relation, in several different ways. For example we could use the
identity type $\_{\equiv}\_$ for $\Setd$; or we could use the notion
of isomorphism determined by morphisms of setoids with their
composition and identity operations. However, none of the known
choices gives rise to a model of a type theory like \ETU{}, the
problem partly being how to interpret \eqref{eq:144} without running
into coherence problems~\citep[p~1284]{PalmgrenE:typtsb}. One could
say that we do not know how to construct an ``open universe'' out of
setoids in a way that fully models type theory with (a hierarchy of)
universes.

Instead one can construct ``closed'' universes, that is, decide in
advance what constructs the universe should contain and then make an
inductive definition of codes for setoids and, simultaneously,
recursive definitions of three things: the setoid denoted by a code,
the equivalence of two codes and the heterogeneous equivalence of
elements of the coded setoids. As discussed in the Introduction,
\citet{AltenkirchT:conusm} and \citet{PalmgrenE:typtsb} both take this
approach, but in different ways. Here we use the approach of
\citet{AltenkirchT:conusm}, as does \citet{PujetL:indus}. Since we want to
model the particular version of extensional type theory described in
section~\ref{sec:extu}, the typing constructs we are fixing in advance
are: dependent products (with $\eta$-conversion), extensional equality
types, empty type, natural number type and a countable hierarchy of
universes closed under those constructs.

To construct the setoid universes we need in Agda, we declare certain
data types $|\U_l|: \Set$ and simultaneously recursively define functions
\begin{gather}
  \| \El_l\| : |\U_l| \fun \Set\label{eq:25}\\
  (\_)\eq_l(\_) : |\U_l| \fun |\U_l| \fun \Set\label{eq:26}\\
  (\_,\_)\heq_l(\_,\_) : (X : |\U_l|) \fun \|\El_l\|X \fun (X' : |\U_l|) \fun
  \|\El_l\|X' \fun \Set\label{eq:27}
\end{gather}
where $l$ ranges over $\NNO$. Then we prove that \eqref{eq:26} is an
\IRU{} equivalence relation (Definition~\ref{def:setd}) and that
\eqref{eq:27} is an \IRU{} heterogeneous equivalence relation over it
(Definition~\ref{def:dsetd}), so that we get a setoid
\begin{gather}
  \U_l:\Setd\label{eq:113}\\
  \intertext{with $\U_l\ni \_{\eq}\_$ given by $\eq_l$, and a
    displayed setoid} 
  \El_l:\Setd[ \U_l ] \label{eq:114}
\end{gather}
with $\El_l\ni\_{\heq}\_$ given by $\heq_l$. Although Agda\footnote{We
  used version~2.8.0} admits such inductive-recursive definitions, its
termination checker is defeated if we try to define the whole
$\NNO$-indexed family by structural recursion on
$l$.\footnote{Furthermore, to satisfy the termination checker it is
  necessary to use a four-place function in \eqref{eq:27} rather than
  a binary function on $\sum[ X \in |\U_l| ]\, \|\El_l\| X$.} So
instead we give the base case ($l=0$) and a successor operation
$(\sum[ U \in \Setd ]\, \Setd[ U ]) \fun (\sum[ U \in \Setd ]\, \Setd[
U ])$, and then define the whole sequence by iteration (a special case
of $\NNO$-elimination).

\subsection{Base universe}

The data type $|\U_0|:\Set$ has four constructors
\begin{gather}
  \Nat : |\U_0|\label{eq:32}\\
  \Emp : |\U_0|\label{eq:9}\\
  \Eq_0 : (A : |\U_0|)(a\,a':\|\El_0\|A) \fun |\U_0|\label{eq:22}\\
  \Pit_0 :
  \begin{array}[t]{@{}l}
    (A : |\U_0|)\\
    (B : \|\El_0\|A \fun |\U_0|)\\
    (q: \forall a\,a'\fun (A,a \heq_0 A', a') \fun (B\,a\eq_0 B\,a'))\\
    {}\fun |\U_0| 
  \end{array}\label{eq:24}
\end{gather}
that code the natural number type, the empty type, equality
types and $\Pi$-types respectively:
\begin{gather}
  \|\El_0\|\,\Nat = \NNO\label{eq:33}\\
  \|\El_0\|\,\Emp = \emptyset\label{eq:34}\\
  \|\El_0\|(\Eq_0\,A\,a\,a') = (A,a\heq_0 A,a')\label{eq:36}\\
  \|\El_0\|(\Pit_0\,A\,B\,q) =
  \begin{array}[t]{@{}l}
    \textstyle\sum[ f \in (a:\|\El_0\|A)\fun \|\El_0\|(B\,a) ]\\
    (\forall a\,a' \fun (A,a\heq_0 A,a') \fun (B\,a, f\,a \heq_0
    B\,a', f\,a')
  \end{array}\label{eq:35} 
\end{gather}
Note that the type \eqref{eq:36} coded by $\Eq_0\,A\,a\,a'$ consists
of equivalences in the fiber of $\El_0$ over $A$
(cf.~Remark~\ref{rem:fibds}); such equivalences will model
definitional equality in type theory and hence we get a model of
\emph{extensional} equality types, whose terms correspond to proofs of
definitional equality. The type \eqref{eq:35} coded by
$\Pit_0\,A\,B\,q$ does not just consist of dependently typed
functions, but rather is essentially a subtype of that (since the
second components of elements of this $\Sigma$-type are never compared
for equality). Using partial equivalence relations (PERs) rather than
equivalence relations, one could simplify $\Pit_0$ by dropping its $q$
argument and take its decoded type to just be a $\Pi$-type, at the
expense of a non-reflexive relation $\heq_0$. For the case of simple
type theory that is essentially what \citet{GandyR:axieI} does,
because there is no other possibility in that setting; but in
dependent type theory with $\Sigma$-types one has the opportunity to
build in reflexivity and only work with equivalences. It may be
possible to develop a PER version of the material presented here
(cf.~\citet{BerryD:forpct}), but the author's experiments with
doing that led him to abandon the attempt in favour of total
equivalence relations, which in a sense fit better with the total
functional programming provided by dependent type theory. I believe
Altenkirch was the first to observe this possibility of using
equivalence relations instead of PERs and in any case definitions
\eqref{eq:24} and \eqref{eq:35} are like the corresponding definitions
of \citet{AltenkirchT:exteit}.

We still have to define $\eq_0$ and $\heq_0$. The non-trivial clauses
in the recursive definition of $\eq_0$ are when its two arguments are
constructor patterns with the same outermost form:
\begin{gather}
  (\Nat\eq_0\Nat) = \top\label{eq:38}\\
  (\Emp\eq_0\Emp) = \top\label{eq:39}\\
  (\Eq_0\,A\,a\,b \eq_0 \Eq_0\,A'\,a'\,b') =
  (A \eq_0 A') \times (A,a\heq_0 A',a') \times (A,b\heq_0
  A',b')\label{eq:40}\\ 
  (\Pit_0\,A\,B\,q \eq_0 \Pit_0\,A'\,B'\,q') =
  (A \eq_0 A') \times \forall a\,a' \fun (A,a\heq_0 A',a') \fun (B\,a
  \eq_0 B'\,a') \label{eq:37} 
\end{gather}
In all other cases (constructor patterns with different outermost
form), we just return $\emptyset$. Similarly, the non-trivial clauses
for $\heq_0$ are ``on the diagonal'' and we return $\emptyset$ in the
other cases:
\begin{gather}
  (\Nat,n\heq_0 \Nat,n') = (n\equiv n') \label{eq:41}\\
  (\Emp,\_\heq \Emp,\_) = \top\label{eq:42}\\
  (\Eq_0\,A\,a\,b, \_ \heq_0 \Eq_0\,A'\,a'\,b', \_) =
  \top\label{eq:43}\\
  (\Pit_0\,A\,B\,q, (f, \_) \heq_0 (\Pit_0\,A'\,B'\,q', (f', \_)) =
  \begin{array}[t]{@{}r}
    \forall a\,a'\fun (A,a\heq_0 A',a') \fun{}\\
    (B\,a, f\,a \heq_0 B'\,a', f'\,a')
  \end{array}\label{eq:44}
\end{gather}
Note that \eqref{eq:43} reflects the fact that in an extensional
equality type all proofs are definitionally equality (cf.~rule
\texttt{UIP} in Figure~\ref{fig:rulee}).  With these definitions one
can show that the required terms $\rfl$, $\sym$, $\trs$, $\hrfl$,
$\hsym$, $\htrs$, $\coe$ and $\coh$ can be defined (the last two have
to be defined simultaneously with $\trs$ and $\htrs$), enabling the
setoid $\U_0$ to be defined with $\U_0\ni\_{\eq}\_$ given by $\eq_0$,
and the displayed setoid $\El_0$ to be defined with
$\El_0\ni\_{\heq}\_$ given by $\heq_0$; see
\agdadef{Setoid.Universes}{BaseUniverse}{BaseUniverse}.

\subsection{Successor-universe operation}

We will be modelling a version of extensional type theory defined
below in section~\ref{sec:extu} that has a countable hierarchy of
universes with (among other things) Agda-style $\Pi$-types. Recall
that if $A$ is a type in universe level $\ell$ in Agda and $B\,x$ is a
type in universe level $\ell'$ for each $x:A$, then $(x:A)\fun B\,x$
is in the universe whose level $\ell''$ is the least upper bound of
$\ell$ and $\ell'$. This suggests that when considering codes for
$\Pi$-types in $\U_{1{+} l}$ we need to consider $\U_m$ for all
$m\leq l$ and hence that $\lambda(l:\NNO) \fun \U_l$ has to be defined
by a course-of-values recursion over $\NNO$. A much simpler solution
is to observe that even if Agda universes are not cumulative (elements
of a universe are not literally elements of higher level universes),
there are lifting operations that inject a universe into a higher one
(see \agdalink{Prelude.Level}); and $(x:A)\fun B\,x$ is isomorphic to
the type obtained by first lifting $A$ and $B$ to the common universe
level $\ell''$ before taking a $\Pi$-type there. Accordingly, by
including codes for lifted types in the setoid universes we can get
away with a much simpler, iterative definition of
$\lambda(l:\NNO) \fun \U_l$.

Given $U:\Setd$ and $E : \Setd[ U ]$ we declare a data type
$|\U_+|:\Set$ that depends implicitly upon them, and simultaneously
define recursive functions
\begin{gather}
  \|\El_+\| : |\U_+| \fun \Set\label{eq:45}\\
  (\_)\eq_+(\_) : |\U_+| \fun |\U_+| \fun \Set\label{eq:46}\\
  (\_,\_)\heq_+(\_,\_) : (X : |\U_+|) \fun \|\El_+\|\,X \fun (X' :
  |\U_+|) \fun 
  \|\El_+\|\,X' \fun \Set\label{eq:47}
\end{gather}
$|\U_+|$ has four constructors
\begin{gather}
  \Univ : |\U_+|\label{eq:48}\\
  \Lft : |U| \fun |\U_+|\label{eq:49}\\
  \Eq_+ : (A : |\U_+|)(a\,a':\|\El_+\|\,A) \fun |\U_+|\label{eq:50}\\
  \Pit_+ :
  \begin{array}[t]{@{}l}
    (A : |\U_+|)\\
    (B : \|\El_+\|\,A \fun |\U_+|)\\
    (q : \forall a\,a'\fun (A,a \heq_+ A', a') \fun (B\,a\eq_+ B\,a'))\\
    {}\fun |\U_+| 
  \end{array}\label{eq:51}
\end{gather}
that code $U$, injecting $U$ into $\U_+$, equality types and
$\Pi$-types respectively:
\begin{gather}
  \|\El_+\|\,\Univ = |U|\label{eq:52}\\
  \|\El_+\|(\Lft\,A) = \| E \|\,A\label{eq:53}\\
  \|\El_+\|(\Eq_+\,A\,a\,a') = (A,a\heq_+ A,a')\label{eq:75}\\
  \|\El_+\|(\Pit_+\,A\,B\,q) =
  \begin{array}[t]{@{}l}
    \textstyle\sum[ f \in ((a:\|\El_+\|\,A)\fun \|\El_+\|(B\,a)) ]\\
    (\forall a\,a' \fun (A,a\heq_+ A,a') \fun (B\,a, f\,a \heq_+
    B\,a', f\,a')
  \end{array}\label{eq:72}
\end{gather}
The non-trivial clauses in the definitions of $\eq_+$ and $\heq_+$ are
\begin{gather}
  (\Univ \eq_+ \Univ) = \top\label{eq:54}\\
  (\Lft\,A\eq_+ \Lft A') = (U \ni A \eq A')\label{eq:55}\\
  (\Eq_0\,A\,a\,b \eq_0 \Eq_0\,A'\,a'\,b') =
  \ldots\text{like \eqref{eq:40}}\ldots\label{eq:56}\\
  (\Pit_+\,A\,B\,q \eq_0 \Pit_+\,A'\,B'\,q') =
  \ldots\text{like \eqref{eq:37}}\ldots\label{eq:57}\\
  (\Univ,A \heq_+ \Univ,A') = (U \ni A \eq A')\label{eq:58}\\
  (\Lft\,A, a\heq_+ \Lft\,A',a') = (E \ni A,a\heq
  A',a')\label{eq:59}\\
  (\Eq_+\,A\,a\,b, \_ \heq_+ \Eq_+\,A'\,a'\,b', \_) =
  \ldots\text{like \eqref{eq:43}}\ldots\label{eq:60}\\
  (\Pit_+\,A\,B\,q, (f, \_) \heq_+ (\Pit_+\,A'\,B'\,q', (f', \_)) =
  \ldots\text{like \eqref{eq:44}}\ldots\label{eq:61}
\end{gather}
with $\emptyset$ being the value in all other cases. Once again these
definitions allow the required terms $\rfl$, $\sym$, $\trs$, $\hrfl$,
$\hsym$, $\htrs$, $\coe$ and $\coh$ to be constructed (see
\agdadef{Setoid.Universes}{SucessorUniverse}{SucessorUniverse}), enabling us to
  define
\begin{gather}
  \U_+ : \{U: \Setd\}\{E : \Setd[ U ]\} \fun
  \Setd \label{eq:62}\\
  \El_+ : \{U : \Setd\}\{E : \Setd[ U ]\} \fun
  \Setd[ \U_+\{U\}\{E\} ] 
\end{gather}
with $\U_+\ni\_{\eq}\_$ given by $\eq_+$ and $\El_+\ni\_{\heq}\_$
given by $\heq_+$.

Now we can define the hierarchy of setoid universes \eqref{eq:113} and
generic displays \eqref{eq:114} by starting with $(\U_0,\El_0)$ and
iterating the successor operation
\[
  \U_{1{+} l} = \U_+\{\U_l\}\{\El_l\}
  \qquad
  \El_{1{+} l} = \El_+\{\U_l\}\{\El_l\}.  
\]

\subsection{A setoid universe for contexts}
\label{sec:setuc}

We have not quite finished constructing setoid universes.  Types in
extensional type theory will be modelled not by setoids themselves but
by codes for setoids, for which we have a well behaved notion of
equivalence that is used to model definitional equality of types.
Typing contexts also need to be modelled by a setoid of codes with a
well-behaved notion of equivalence, since we have to give a semantics
to definitional equality of contexts (even though it is a derivable
notion in the presentation of section~\ref{sec:extu}). Therefore we
will construct a setoid universe for codes of typing contexts. It just
needs a unit type for the empty context and a $\Sigma$-type for
extending a context with a type (at any universe level). We call it
\begin{gather}
  \C:\Setd\label{eq:115}\\
  \intertext{ and the associated display}
  \E:\Setd[ \C ]\label{eq:116}
\end{gather}
We define a data type $|\C|:\Set$ and simultaneously define recursive
functions
\begin{gather}
  \|\E\| : |\C|\fun \Set \label{eq:68}\\
  \_{\eq^\c}\_ : |\C| \fun |\C| \fun \Set\label{eq:67}\\
  \_,\_{\heq^\c}\_,\_ : (C : |\C|) \fun \|\E\|\,C \fun
  (C' : |\C|) \fun \|\E\|\,C' \fun \Set\label{eq:69}
\end{gather}
$|\C|$ has two constructors
\begin{gather}
  \Unit : |\C|\label{eq:70}\\
  \Sigt :
  \begin{array}[t]{@{}l}
    (C : |\C|)(l : \NNO)(F : \|\E\|\,C \fun |\U_l|)\\
    \left(q : \forall c\,c'\fun (C,c \heq^\c C,c') \fun
    (\U_l\ni F\,c \eq F\,c')\right)
    \fun |\C| 
  \end{array} \label{eq:71}
\end{gather}
that code the empty context and context extension respectively:
\begin{gather}
  \|\E\|\,\Unit = \top\label{eq:73}\\
  \|\E\|\,(\Sigt\,C\,l\,F\,q) =
  \textstyle\sum[ c \in \|\E\|\,C ]\,\|
  \El_l\|(F\,c)\label{eq:74} 
\end{gather}
The non-trivial clauses in the definitions of $\eq^\c$ and
$\heq^\c$ are
\begin{gather}
  (\Unit \eq^\c \Unit) = \top\label{eq:76}\\
  (\Sigt\,C\,l\,F\,q \eq^\c \Sigt\,C'\,l'\,F'\,q') =(C \eq^\c
  C') \times \textstyle\sum[ p \in (l \equiv l')]\label{eq:77}\\
    \qquad\forall c\,c' \fun (C,c\heq^\c C',c') \fun (\U_{l'}\ni
  \mathtt{subst}\,\_\,p\,(F\,c) \eq F'\, c')\notag\\
  (\Unit , \tt \heq^\c \Unit,\tt) = \top\label{eq:78}\\
  (\Sigt\,C\,l\,F\,q , (c, t) \heq^\c \Sigt\,C'\,l'\,F'\,q' , (c',
  t')) = {}\label{eq:79}\\
    \qquad (C,c \heq^\c C',c') \times
    \textstyle\sum[ p \in (l \equiv l')]\,
    (\El_{l'}\ni \mathtt{subst}\,\_\,p\,(F\,c, t) \heq (F'\,c',t'))
    \notag 
\end{gather}
with $\emptyset$ being the value in all other cases. (Clauses
\eqref{eq:77} and \eqref{eq:79} use the function $\mathtt{subst}$ for
transporing Agda terms along identity proofs; see
\agdalink{Prelude.Identity}.) As before, these definitions allow the
required terms $\rfl$, $\sym$, $\trs$, $\hrfl$, $\hsym$, $\htrs$,
$\coe$ and $\coh$ to be constructed (see
\agdadef{Setoid.Universes}{ContextUniverse}{ContextUniverse}),
enabling us to define \eqref{eq:115} with $\C\ni \_{\eq}\_$ given by
$\eq^\c$, and \eqref{eq:116} with $\E\ni\_{\heq}\_$ given by
$\heq^\c$.

\section{Extensional Type Theory with Universes}
\label{sec:extu}

We give an ``extrinsic'' formalization in \IRU{} of a version of
the extensional type theory with
universes of~\citet{Martin-LoefP:inttt}, which we call \ETU.  Thus there are
\IRU{} inductive datatypes of raw \ETU{} expressions and inductively
defined judgements about which expressions are well-formed. We use
three basic forms of judgement (see \agdalink{ETU.Judgement})
\begin{align}
  &\Ok\Gamma
  &&\text{context $\Gamma$ is well-formed}\label{eq:84}\\
  &\Gamma\ent a :_l A
  &&\text{in context $\Gamma$, term $a$ has type $A$ in universe
     $\ETUU_l$}\label{eq:85}\\ 
  &\Gamma \ent a=a':_l A
  &&\text{in context $\Gamma$, terms $a$ and $a'$ are definitionally
     equal}\notag\\
  &&&\text{of type $A$ in universe $\ETUU_l$}\label{eq:86} 
\end{align}
where $l:\NNO$ is a universe level.  Judgements about being a
well-formed type, or being definitionally equal types are special
cases of these, since we only consider types that lie in some universe
and the universes $\ETUU_l$ are themselves types (indeed, if
$\Ok\Gamma$ then $\Gamma\ent \ETUU_l :_{2{+} l} \ETUU_{1{+} l}$). We
make the following abbreviations for judgements about types:
\begin{equation}\label{eq:111}
  \begin{array}{rcl}
    (\Gamma\ent A : \ETUU_l) &=& (\Gamma\ent A:_{1{+} l}\ETUU_l)\\
    (\Gamma\ent A = A' : \ETUU_l) &=& (\Gamma \ent A = A' :_{1{+} l}\ETUU_l)
  \end{array}
\end{equation}
The syntax of \ETU{} terms (some of which denote types) and contexts use the
following notation (see \agdalink{ETU.Syntax}):
\begin{align}
  &\ETUU_l &&\text{universe type}\label{eq:87}\\
  &\ETUPi_{l,l'}\,A\,B &&\text{$\Pi$-types}\label{eq:88}\\
  &\ETUlam_A\,b &&\text{function abstraction}\label{eq:89}\\
  &\ETUapp{b}{A}{B}{a} &&\text{function application}\label{eq:90}\\
  &\ETUEq_A\,a\,a' &&\text{extensional equality type}\label{eq:91}\\
  &\ETUrefl_A\,a &&\text{reflexivity proof}\label{eq:92}\\
  &\ETUEmp &&\text{empty type}\label{eq:93}\\
  &\ETUemp_A\,a &&\text{empty type eliminator}\label{eq:94}\\
  &\ETUNat &&\text{natural number type}\label{eq:95}\\
  &\ETUzero &&\text{zero}\label{eq:96}\\
  &\ETUsucc\,a &&\text{successor}\label{eq:97}\\
  &\ETUnrec_B\,b\,b'\,a &&\text{eliminator for natural number
                           type}\label{eq:98}\\ 
  &x &&\text{variable ($x:\Atom$)}\label{eq:101}\\
  &\ETUECx &&\text{empty context}\label{eq:99}\\
  &\Gamma\cons{x:_l A} &&\text{extended context}\label{eq:100}
\end{align}
Note that we are giving a ``core'' language rather than a ``surface''
language (to use the terminology of \citet{SterlingJ:bidec}), in which
terms are decorated with explicit universe levels and types sufficient
for them to be assigned a meaning by the semantics of
section~\ref{sec:semetu}.

To deal with variable binding in the syntax we employ the well-scoped
locally nameless representation of syntax advocated
by~\citet{PittsAM:welsln} and its accompanying Agda library
\agdalink{WSLN}. Thus for each $n:\NNO$ there is an Agda datatype
$\Tm{n}$ of $n$-fold abstracted terms, or \emph{$n$-terms} for short;
These $n$-terms may contain named free variables (Agda terms of type
$\Atom$, which is a copy of $\NNO$) and de~Bruijn indices less than
$n$ (which are terms of the datatype $\Fin\,n$ of finite ordinals less
than $n$). For example in \eqref{eq:88} $B$ is a $1$-term, as is $b$
in \eqref{eq:89}; and in \eqref{eq:98} $B$ is a $1$-term and $b'$ is a
$2$-term (see Figure~\ref{fig:nrec} to see why). This form of syntax
representation factors out $\alpha$-equivalence while remaining close
to informal, ``nameful'' accounts. Given $x:\Atom$ and $t: \Tm{n}$
there is an \emph{abstraction} $x.\,t:\Tm{1{+} n}$ and given
$t:\Tm{1{+} n}$ and $u:\Tm{0}$ there is a \emph{concretion}
$t[ u ]: \Tm{n}$; the definition of these operations involves some
de~Bruijn index yoga (see \citet[section~4]{PittsAM:welsln}), but once
defined they have the expected properties and can be used without much
pain. Most of the time we only need $0$-terms, which we just refer to
as \emph{terms} and write $\TM$ for $\Tm{0}$; indeed the expressions
$a$, $a'$ and $A$ in the judgements \eqref{eq:85} and \eqref{eq:86}
are $0$-terms.  Some terms denote types and we write $T:\Ty{n}$ (or
$T:\TY$ in case $n=0$) when we want to flag that an $n$-term $T$ is
playing the role of a type.

The rules for inductively defining the valid \ETU{} judgements are given
in \agdalink{ETU.Rules}. More specifically, the judgements
\eqref{eq:84}--\eqref{eq:86} are Agda datatypes and each rule takes
the form of a constructor for one of those datatypes. Here we only
reproduce a few of them for comment.

\begin{figure}
  \renewcommand{\arraystretch}{.8}
  \begin{mathpar}
    \begin{array}[b]{l}
      {\ent}{\ETUEq}:\\
      \quad\{l:\NNO\}\\
      \quad \{A\,a\,b: \TM\}\\
      \quad (\_ : \Gamma \ent a:_l A)\\
      \quad (\_ : \Gamma \ent b :_l A)\\
      \quad\text{{-}\,{-} helper hypothesis}\\
      \quad (\_ : \Gamma \ent A : \ETUU_l)\\
      \quad{\fun}\;\text{{-}\,{-}\,{-}\,{-}\,{-}\,%
      {-}\,{-}\,{-}\,{-}\,{-}\,{-}\,{-}\,{-}\,{-}\,{-}}\\
      \quad\Gamma \ent \ETUEq_A\,a\,b :\ETUU_l
    \end{array}
    \and
    \begin{array}[b]{l}
      {\ETUEq}\mathtt{Cong}:\\
      \quad\{l:\NNO\}\\
       \quad \{A\,A'\,a\,a'\,b\,b': \TM\}\\
       \quad(\_ : \Gamma \ent A = A' : \ETUU_l)\\
      \quad (\_ : \Gamma \ent a = a':_l A)\\
      \quad (\_ : \Gamma \ent b = b':_l A)\\
      \quad{\fun}\;\text{{-}\,{-}\,{-}\,{-}\,{-}\,{-}%
      \,{-}\,{-}\,{-}\,{-}\,{-}\,{-}\,{-}\,{-}\,{-}\,{-}\,{-}\,{-}\,{-}\,{-}}\\
      \quad\Gamma \ent \ETUEq_A\,a\,b = \ETUEq_{A'}\,a'\,b' :\ETUU_l 
    \end{array}
    \\
    \begin{array}[b]{l}
      {\ent}{\ETUrefl}:\\
      \quad\{l:\NNO\}\\
      \quad \{A\,a: \TM\}\\
      \quad (\_ : \Gamma \ent a:_l A)\\
      \quad\text{{-}{-} helper hypothesis}\\
      \quad (\_ : \Gamma \ent A :\ETUU_l)\\
      \quad{\fun}\; \text{{-}\,{-}\,{-}\,{-}\,{-}%
      \,{-}\,{-}\,{-}\,{-}\,{-}\,{-}\,{-}\,{-}\,{-}\,{-}}\\
      \quad\Gamma \ent \ETUrefl_A\,a :_l \ETUEq_A\,a\,a
    \end{array}
    \and
    \begin{array}[b]{l}
      {\ETUrefl}\mathtt{Cong}:\\
      \quad\{l:\NNO\}\\
      \quad \{A\,A'\,a\,a': \TM\}\\
      \quad (\_ : \Gamma \ent A = A' :\ETUU_l)\\
      \quad(\_ : \Gamma \ent a = a' :_l A)\\
      \quad{\fun}\; \text{{-}\,{-}\,{-}\,{-}\,{-}\,{-}\,{-}%
      \,{-}\,{-}\,{-}\,{-}\,{-}\,{-}\,{-}\,{-}\,{-}\,{-}%
      \,{-}\,{-}\,{-}\,{-}\,{-}}\\
      \quad\Gamma \ent \ETUrefl_A\,a : \ETUrefl_{A'}\,a' :_l \ETUEq_A\,a\,a
    \end{array}
    \\
    \begin{array}[b]{l}
     \mathtt{Reflect}:\\
      \quad\{l:\NNO\}\\
      \quad \{A\,a\,b\,e: \TM\}\\
      \quad (\_ : \Gamma \ent a:_l A)\\
      \quad (\_ : \Gamma \ent b :_l A)\\
      \quad(\_ : \Gamma \ent e :_l \ETUEq_A\,a\,b)\\
      \quad\text{{-}\,{-} helper hypothesis}\\
      \quad (\_ : \Gamma \ent A :\ETUU_l)\\
      \quad{\fun}\;\text{{-}\,{-}\,{-}\,{-}\,{-}\,{-}\,{-}%
      \,{-}\,{-}\,{-}\,{-}\,{-}\,{-}\,{-}\,{-}}\\
      \quad\Gamma \ent a = b :_l A
    \end{array}
    \and
    \begin{array}[b]{l}
     \mathtt{UIP}:\\
      \quad\{l:\NNO\}\\
      \quad \{A\,a\,b\,e\,e': \TM\}\\
      \quad (\_ : \Gamma \ent a :_l A)\\
      \quad (\_ : \Gamma \ent b :_l A)\\
      \quad(\_ : \Gamma \ent e :_l \ETUEq_A\,a\,b)\\
      \quad(\_ : \Gamma \ent e' :_l \ETUEq_A\,a\,b)\\
      \quad\text{{-}\,{-} helper hypothesis}\\
      \quad (\_ : \Gamma \ent A :\ETUU_l)\\
      \quad{\fun}\;\text{{-}\,{-}\,{-}\,{-}\,{-}\,{-}\,{-}%
      \,{-}\,{-}\,{-}\,{-}\,{-}\,{-}\,{-}\,{-}}\\
      \quad\Gamma \ent e = e' :_l \ETUEq_A\,a\,b
    \end{array}
    \\
    \begin{array}[b]{l}
      {\ent}{\ETUEmp}:\\
      \quad(\_ : \Ok\,\Gamma)\\
      \quad{\fun}\;\text{{-}\,{-}\,{-}\,{-}\,{-}\,{-}\,{-}\,{-}\,{-}}\\
      \quad \Gamma\ent \ETUEmp : \ETUU_0
    \end{array}
    \and
    \begin{array}[b]{l}
      {\ent}{\ETUemp}:\\
      \quad\{l:\NNO\}\\
      \quad\{a\,B : \TM\}\\  
      \quad(\_ : \Gamma\ent B :\ETUU_l)\\
      \quad(\_ : \Gamma \ent a :_0 \ETUEmp)\\
      \quad{\fun}\;\text{{-}\,{-}\,{-}\,{-}\,{-}\,{-}%
      \,{-}\,{-}\,{-}\,{-}\,{-}\,{-}}\\
      \quad \Gamma\ent \ETUemp_B\, a :_l B
    \end{array}
    \and
      \begin{array}[b]{l}
      {\ETUemp}\mathtt{Cong}:\\
      \quad\{l:\NNO\}\\
      \quad\{a\,a'\,B\,B' : \TM\}\\  
      \quad(\_ : \Gamma\ent B = B' :\ETUU_l)\\
      \quad(\_ : \Gamma \ent a = a' :_0 \ETUEmp)\\
        \quad{\fun}\;\text{{-}\,{-}\,{-}\,{-}\,{-}\,{-}\,{-}%
        \,{-}\,{-}\,{-}\,{-}\,{-}\,{-}\,{-}\,{-}\,{-}\,{-}\,{-}\,{-}\,{-}}\\
      \quad \Gamma\ent \ETUemp_B\, a = \ETUemp_{B'}\, a' :_l B
      \end{array}
      \\
    \end{mathpar}
    
  \caption{Rules for extensional equality and the empty type}
  \label{fig:rulee}
\end{figure}

\begin{figure}
  \renewcommand{\arraystretch}{.8}
  \begin{mathpar}
    \begin{array}[b]{l}
      {\ent}{\ETUnrec}:\\
      \quad\{l:\NNO\}\\
      \quad\{B : \Ty{1}\}\\
      \quad\{b\,a: \TM\}\\
      \quad\{b' : \Tm{2}\}\\
      \quad(S :\FsetA)\\
      \quad(\_: \Gamma \ent b :_l B [\ETUzero])\\
      \quad(\_ : \forall x\,y \fun x\freshfor y\freshfor S \fun
      \Gamma\cons {x :_0 \ETUNat}\cons {y :_l B[ x ]} \ent
      b'[x][y] :_l B[ \ETUsucc\,x ])\\
      \quad(\_ : \Gamma \ent a :_0 \ETUNat)\\
      \quad\text{{-}\,{-} helper hypothesis}\\
      \quad(\_ : \forall x \fun x\freshfor S \fun \Gamma\cons {x:_0
      \ETUNat} \ent B[x] :\ETUU_l)\\
      \quad{\fun}\;\text{{-}\,{-}\,{-}\,{-}\,{-}\,{-}\,{-}\,{-}%
      \,{-}\,{-}\,{-}\,{-}\,{-}\,{-}\,{-}\,{-}\,{-}\,{-}\,{-}%
      \,{-}\,{-}\,{-}\,{-}\,{-}\,{-}\,{-}\,{-}\,{-}\,{-}\,{-}%
      \,{-}\,{-}\,{-}\,{-}\,{-}\,{-}\,{-}\,{-}\,{-}\,{-}\,{-}%
      \,{-}\,{-}\,{-}\,{-}\,{-}\,{-}\,{-}\,{-}\,{-}\,{-}\,{-}\,{-}\,{-}}\\
      \quad \Gamma\ent \ETUnrec_B\,b\,b'\,a :_l B[ a ]
    \end{array}
    \\
    \begin{array}[b]{l}
      \ETUnrec\mathtt{Beta}_0:\\
      \quad\{l:\NNO\}\\
      \quad\{B : \Ty{1}\}\\
      \quad\{b: \TM\}\\
      \quad\{b' : \Tm{2}\}\\
      \quad(S :\FsetA)\\
      \quad(\_: \Gamma \ent b :_l B [\ETUzero])\\
      \quad(\_ : \forall x\,y \fun x\freshfor y\freshfor S \fun
      \Gamma\cons {x :_0 \ETUNat}\cons {y :_l B[ x ]} \ent
      b'[x][y] :_l B[ \ETUsucc\,x ])\\
      \quad\text{{-}\,{-} helper hypothesis}\\
      \quad(\_ : \forall x \fun x\freshfor S \fun \Gamma\cons {x:_0
      \ETUNat} \ent B[x] :\ETUU_l)\\
      \quad{\fun}\;\text{{-}\,{-}\,{-}\,{-}\,{-}\,{-}\,{-}\,{-}\,{-}%
      \,{-}\,{-}\,{-}\,{-}\,{-}\,{-}\,{-}\,{-}\,{-}\,{-}\,{-}\,{-}%
      \,{-}\,{-}\,{-}\,{-}\,{-}\,{-}\,{-}\,{-}\,{-}\,{-}\,{-}\,{-}%
      \,{-}\,{-}\,{-}\,{-}\,{-}\,{-}\,{-}\,{-}\,{-}\,{-}\,{-}\,{-}%
      \,{-}\,{-}\,{-}\,{-}\,{-}\,{-}\,{-}\,{-}\,{-}}\\
      \quad \Gamma\ent \ETUnrec_B\,b\,b'\,\ETUzero = b :_l B[ \ETUzero ]
    \end{array}
    \\
    \begin{array}[b]{l}
      \ETUnrec\mathtt{Beta}_+:\\
      \quad\{l:\NNO\}\\
      \quad\{B : \Ty{1}\}\\
      \quad\{b\,a: \TM\}\\
      \quad\{b' : \Tm{2}\}\\
      \quad(S :\FsetA)\\
      \quad(\_: \Gamma \ent b :_l B [\ETUzero])\\
      \quad(\_ : \forall x\,y \fun x\freshfor y\freshfor S \fun
      \Gamma\cons {x :_0 \ETUNat}\cons {y :_l B[ x ]} \ent
      b'[x][y] :_l B[ \ETUsucc\,x ])\\
      \quad(\_ : \Gamma \ent a :_0 \ETUNat)\\
      \quad\text{{-}\,{-} helper hypothesis}\\
      \quad(\_ : \forall x \fun x\freshfor S \fun \Gamma\cons {x:_0
      \ETUNat} \ent B[x] :\ETUU_l)\\
      \quad{\fun}\;\text{{-}\,{-}\,{-}\,{-}\,{-}\,{-}\,{-}\,{-}\,{-}%
      \,{-}\,{-}\,{-}\,{-}\,{-}\,{-}\,{-}\,{-}\,{-}\,{-}\,{-}\,{-}%
      \,{-}\,{-}\,{-}\,{-}\,{-}\,{-}\,{-}\,{-}\,{-}\,{-}\,{-}\,{-}%
      \,{-}\,{-}\,{-}\,{-}\,{-}\,{-}\,{-}\,{-}\,{-}\,{-}\,{-}\,{-}%
      \,{-}\,{-}\,{-}\,{-}\,{-}\,{-}\,{-}\,{-}\,{-}}\\
      \quad \Gamma\ent 
      \ETUnrec_B\,b\,b'\,(\ETUsucc\,a) =
      b'[a][\ETUnrec_B\,b\,b'\,a] :_l B[ \ETUsucc\,a ]
    \end{array}
    \\
  \end{mathpar}
  \caption{Some of the rules for the eliminator for the natural number type}
  \label{fig:nrec}
\end{figure}

Figure~\ref{fig:rulee} gives the constructor functions for extensional
equality and the empty type. Note that some of those functions are to
do with congruence properties of the syntax constructions; it is
common to omit such rules in an informal textual description, but of
course they cannot be left out of a fully formal development. Note
also that some of the functions seem to have unnecessary arguments,
the ones marked with comments\footnote{Agda uses two or more
  consecutive hyphens {-}{-} to mark comments.} as ``helper
hypotheses''. They are indeed ultimately unnecessary since one can
prove that if $\Gamma\ent a:_lA$ is valid then so is
$\Gamma\ent A :\ETUU_l$ (see \agdalink{ETU.Admissible}). But along the
way to proving that and other useful properties of the type system it
is helpful to reason by induction on the structure of elements of the
judgement datatypes (rather than by induction on some notion of their
size) and these and other helper hypotheses aid this (cf.~the comments
by \citet[page~23:9]{AbelA:decctt}).

Figure~\ref{fig:nrec} gives some of the constructor functions
associated with $\ETUnrec$ terms (there are also constructor functions
for congruence properties, not shown in the figure), illustrating the
use of the concretion operation on term abstractions mentioned
above. The 7th and 9th arguments of ${\ent}\ETUnrec$ also illustrate
the way \emph{cofinite quantifications} such as
$\forall x \fun x\freshfor S \fun{}$ are used to deal with freshness
conditions in rule hypotheses. Elements of the type $\FsetA$ code
finite sets of variables (see \agdalink{WSLN.Atom}) and the datatype
$x\freshfor S$ contains proofs that variable $x:\Atom$ is not in the
set $S:\FsetA$; and similarly for $x\freshfor y \freshfor S$ (see
\agdalink{WSLN.Fresh}). Although initiated elsewhere, this use of
cofinite quantification in the context of language meta-theory is
convincingly advocated by \citet{PierceB:engfm}. It certainly makes it
easier to prove some of the properties of \ETU, such as weakening (see
\agdalink{ETU.Weakening}). Finitary versions of the typing rules,
eliminating cofinite quantification in favour of explicit freshness
hypotheses, are inter-derivable (and needed); see
\agdalink{ETU.ExistsFresh}.

The syntactic properties of \ETU{} make use of other forms of judgement
which are definable in terms of \eqref{eq:84}--\eqref{eq:86} (see
\agdalink{ETU.Rules}):
\begin{align}
  &{}\ent \Gamma = \Gamma'
  &&\text{$\Gamma$ and $\Gamma'$ are definitionally equal
     contexts}\label{eq:102} \\
  &\Gamma'\ent\sigma:\Gamma
  &&\text{in context $\Gamma'$, $\sigma$ is a well-typed substitution
     of type $\Gamma$}\label{eq:103}\\
  &\Gamma'\ent\sigma = \sigma':\Gamma
  &&\text{in context $\Gamma'$, $\sigma$ and $\sigma'$ are}\notag\\
  &&&\text{definitionally equal substitutions
     of type $\Gamma$}\label{eq:104}
\end{align}
Here $\sigma$ and $\sigma'$ are functions $\Atom\fun\TM$ from
variables to terms. One can show that provable judgements respect
definitional equality of contexts (see \agdalink{ETU.Admissible})
\begin{gather}
  ({}\ent \Gamma' = \Gamma) \fun (\Gamma\ent a :_l A) \fun (\Gamma'
  \ent a :_l A)\label{eq:105} \\
  ({}\ent \Gamma' = \Gamma) \fun (\Gamma\ent a = a' :_l A) \fun (\Gamma'
  \ent a = a' :_l A)\label{eq:106}\\
  \intertext{ and are stable under substitution (see
    \agdalink{ETU.Substitution})} 
  (\Gamma'\ent\sigma:\Gamma) \fun (\Gamma\ent a:_l A) \fun (\Gamma'\ent
  \sigma\ast a :_l \sigma\ast A)\label{eq:107}\\
  (\Gamma'\ent\sigma:\Gamma) \fun (\Gamma\ent a=a':_l A) \fun (\Gamma'\ent
  \sigma\ast a = \sigma\ast a' :_l \sigma\ast A)\label{eq:108}
\end{gather}
Here $\sigma\ast a$ denotes the term resulting from the application of
a substitution $\sigma$ to a term $a$ (defined in
\agdalink{WSLN.Sig.Substitution}). The proof of \eqref{eq:107} and
\eqref{eq:108} uses weakening 
\begin{gather}
  (\Gamma'\triangleright\Gamma) \fun (\Gamma\ent a:_lA) \fun
  (\Gamma'\ent a:_l A) \label{eq:142}\\
  (\Gamma'\triangleright\Gamma) \fun (\Gamma\ent a= a':_lA) \fun
  (\Gamma'\ent a = a':_l A) \label{eq:143}
\end{gather}
where the type $\Gamma'\triangleright\Gamma$ (defined in
\agdalink{ETU.Rules}) expresses that $\Gamma'$ is obtained by adding
hypotheses to the context $\Gamma$; see \agdalink{ETU.Weakening}.

\section{Semantics of \ETU}
\label{sec:semetu}

In this section we show how to give a semantics within \IRU{} to the
well-formed contexts, types and terms of \ETU{} that is sound for its
definitional equality, using the setoid universes from
Section~\ref{sec:setu}. In addition to the setoid universe $\C$ for
contexts described in Section~\ref{sec:setuc} we will need to define
displayed setoids indexed by universe levels $l:\NNO$:
\begin{align}
  &\Fam_l: \Setd[ \C ]
  &&\text{for interpreting \ETU{} types}\label{eq:109}\\
  &\Elem_l:\Setd[ \C\ltimes \Fam_l]
  &&\text{for interpreting \ETU{} terms}\label{eq:110}
\end{align}
(Recall setoid comprehension $\ltimes$ from Definition~\ref{def:com}.)
Before defining these displayed setoids, we sketch what the semantics
looks like in terms of them. It will take some effort to reach the
definition of the following functions (which can be found in
\agdalink{Semantics.Function}).
\begin{enumerate}
  
\item[(S1)] \textbf{\ETU{} contexts}: the semantics of a context
  $\Gamma$ is an element of the universe for contexts,
  $\dencx{\Gamma}_p : |\C|$, dependent upon a proof
  $p: (\Ok\,\Gamma)$. The semantics is \emph{proof-irrelevant} in the
  sense that if $p\,p': (\Ok\,\Gamma)$, then
  $\C\ni \dencx{\Gamma}_p \eq \dencx{\Gamma}_{p'}$.

\item[(S2)] \textbf{\ETU{} types}: the semantics of a type $A$ of
  universe level $l$ in context $\Gamma$ is an element
  $\denty{\Gamma\ent_l A}_{p,q} : \|\Fam_l\|\,\dencx{\Gamma}_p$
  dependent upon proofs $p:(\Ok\,\Gamma)$ and
  $q:(\Gamma\ent A :\ETUU_l)$. The semantics is
  \emph{proof-irrelevant} in the sense that if $p,p':(\Ok\,\Gamma)$
  and $q,q':(\Gamma\ent A :\ETUU_l)$, then
  $\Fam_l\ni \dencx{\Gamma}_p\mathrel{,} \denty{\Gamma\ent_l A}_{p,q}
  \heq \dencx{\Gamma}_{p'}\mathrel{,} \denty{\Gamma\ent_l A}_{p',q'}$.
  
\item[(S3)] \textbf{\ETU{} terms}: the semantics of a term $a$ of type
  $A$ of universe level $l$ in context $\Gamma$ is an element
  $\dentm{\Gamma\ent_l a}_{p,q,r} : \|\Elem_l\|(\dencx{\Gamma}_p
  \mathrel{,} \denty{\Gamma\ent_l A}_{p,q})$ dependent upon proofs
  $p:(\Ok\,\Gamma)$, $q:(\Gamma\ent A :\ETUU_l)$ and
  $r:(\Gamma\ent a:_l A)$.  The semantics is \emph{proof-irrelevant}
  in the sense that if $p\,p': (\Ok\,\Gamma)$,
  $q, q':(\Gamma\ent A :\ETUU_l)$ and $r, r':(\Gamma\ent a:_l A)$,
  then
  \[
    \Elem_l\ni
      (\dencx{\Gamma}_p\mathrel{,} \denty{\Gamma\ent_l
      A}_{p,q}) \mathrel{,} \dentm{\Gamma\ent_l a}_{p,q,r} 
      \heq (\dencx{\Gamma}_{p'}\mathrel{,} \denty{\Gamma\ent_l A}_{p',q'})
      \mathrel{,} \dentm{\Gamma\ent_l a}_{p',q',r'}
  \]

\item[(S4)] \textbf{\ETU{} definitional equality}: given
  $p:(\Ok\,\Gamma)$, $q:(\Gamma\ent A :\ETUU_l)$,
  $r : (\Gamma \ent a :_l A)$ and $r': (\Gamma \ent a' :_l A)$, then
  we have the following \emph{soundness} property\footnote{using the notation
    for the setoid fibres $B\fib x$ of a displayed setoid $B$ from
    Remark~\ref{rem:fibds}}
  \[
    (\Gamma\ent a = a':_l A) \fun
    \Elem_l
    \fib  (\dencx{\Gamma}_p\mathrel{,} \denty{\Gamma\ent_l
      A}_{p,q}) \ni \dentm{\Gamma\ent_l a}_{p,q,r} \eq
    \dentm{\Gamma\ent_l a'}_{p,q,r'}
  \]
  Similarly, given $q':(\Gamma\ent A' :\ETUU_l)$ 
  \[
    (\Gamma\ent A = A' :\ETUU_l) \fun
   \Fam_l \fib \dencx{\Gamma}_p \ni \denty{\Gamma\ent_l
      A}_{p,q} \eq \denty{\Gamma\ent_l A'}_{p,q'}
  \]
\end{enumerate}

\subsection{Families and their elements}
\label{sec:fame}

We now define $\Fam_l$ \eqref{eq:109} and $\Elem_l$
\eqref{eq:110}. Recall from Section~\ref{sec:extu} that \ETU{} types are
just particular \ETU{} terms and that the judgements for types and
their definitional equality are abbreviations for judgements about
terms~\eqref{eq:111}. Accordingly, for each context code $C:|\C|$ we
want $\|\Fam_l\|\,C$ to be definitionally equal in \IRU{} to
$\|\Elem_l\|(C , \univ_l)$ where $\univ_l : \|\Fam_{1{+} l}\|\,C$ is a
code (to be defined) for the level-$l$ universe;
cf.~\citet[section~2.3]{CoquandT:canndt}. To achieve this we
define both $\Fam_l$ and $\Elem_l$ in terms of a common notion, namely
the type $\Setd[ \E\fib C \Vdash F \ast \El_l ]$ of sections (in the
sense of Definition~\ref{def:dsetd}) of the re-indexing of
\eqref{eq:114} along a setoid morphism $F: |\E\fib C \fun \U_l|$.

\begin{defn}[$\Fam$]
  \label{def:fam}
  For each universe level $l:\NNO$ and context code $C:|\C|$ there is
  a setoid morphism $\univ_l : |\E\fib C \fun \U_{1{+}l}|$ which is
  constantly the element $\Univ : |\U_{1{+}l}|$ 
  \eqref{eq:48}. We define
  \begin{equation}
    \label{eq:117}
    \|\Fam_l\|\,C = \Setd[ \E\fib C \Vdash \univ_l \ast \El_{1{+}l} ]
  \end{equation}
  Thus each $T: \|\Fam_l\|\,C$ has an underlying function $\|T\|$
  \eqref{eq:16} which in view of \eqref{eq:52} has type
  $|\E\fib C| \fun |\U_l|$; and this function has a congruence
  property $\hcng\,T$ \eqref{eq:17} which in view of \eqref{eq:58} has
  type
  $\forall c\,c' \fun (\E\fib C \ni c \eq c') \fun (\U_l \ni \|T\|\,c
  \eq \|T\|\,c')$.  In other words, the elements of 
  $\|\Fam_l\|\,C$ are the setoid morphisms $\E\fib C \fun \U_l$ and in
  particular we have
  \begin{equation}
    \label{eq:31}
    \univ_l : \|\Fam_{1{+}l}\|\,C
  \end{equation}
  We get a displayed setoid $\Fam_l : \Setd[ \C ]$ by defining
  \begin{equation}
    \label{eq:28}
    (\Fam_l \ni C, T \heq C, T') =  \forall c\,c' \fun (\E\ni C , c
    \heq C', c') \fun (\U_l \ni \|T\|\,c \eq \|T'\|\,c')
  \end{equation}
  (see \agdadef{Semantics.CwF}{Fam}{Fam} for the definitions of the
  terms $\hrfl (\Fam_l)$, $\hsym (\Fam_l)$, $\htrs (\Fam_l)$,
  $\coe (\Fam_l)$ and $\coh (\Fam_l)$). Note that if
  $T: \|\Fam_l\|\,C$, then using the constructor $\Sigt$ \eqref{eq:71}
  we get a code for an extended context $C\ltimes_l T : |\C|$ where
  \begin{equation}
    \label{eq:81}
    C\ltimes_l T = \Sigt\,C\,l\,\|T\|\,(\hcng\,T)
  \end{equation}
\end{defn}

\begin{defn}[$\Elem$]
  \label{def:elem}
  For each universe level $l:\NNO$, context code $C:|\C|$ and family
  $T: \|\Fam_l\|\,C$, we noted above that $T$ is a setoid morphism
  from $\E\fib C \fun \U_l$; re-indexing $\El_l$ along this morphism
  we define
  \begin{equation}
    \label{eq:29}
    \|\Elem_l\|(C,T) = \Setd[ \E\fib C \Vdash T \ast \El_l ]
  \end{equation}
  Thus each $t: \|\Elem_l\|(C,T)$ has an underlying function $\|t\|$
  \eqref{eq:16}  of type $(c: |\E\fib C|)\fun \|\El_l\|(\| T \| \, c)$
  and a congruence property $\hcng\,t$ \eqref{eq:17} of type
  $\forall c\,c'\fun (\E\fib C \ni c \eq c') \fun (\El_l\ni \|T\|\,c ,
  \|t\|\,c \heq \|T\|\,c' , \|t\|\, c')$. We get a displayed setoid
  $\Elem_l:\Setd[ \C\ltimes \Fam_l]$ by defining
  \begin{multline}
    \label{eq:30}
    (\|\Elem_l\|\ni (C,T)\,t \heq (C',T'),t')
    = {}\\
    \forall c\,c'\fun (\E\ni C, c \heq C' , c') \fun
    (\El_l\ni \|T\|\,c , \|t\|\,c \heq \|T'\|\,c' , \|t'\|\, c')
  \end{multline}
  (see \agdadef{Semantics.CwF}{Elem}{Elem} for the definitions of the
  terms $\hrfl (\Elem_l)$, $\hsym (\Elem_l)$, $\htrs (\Elem_l)$,
  $\coe (\Elem_l)$ and $\coh (\Elem_l)$). Combining \eqref{eq:117} and
  \eqref{eq:29} we have the desired definitional equality between
  $\|\Fam_l\|\,C$ and $\|\Elem_{1{+}l}\|(C,\univ_l)$.
\end{defn}

The displayed setoids $\Fam_l$ \eqref{eq:109} and $\Elem_l$
\eqref{eq:110} inherit the structure that has been built into the
definition of the setoid universes from Section~\ref{sec:setu}, which
we will use in the next section to give a semantics for \ETU.  Here we
will just describe the structure that models extensional equality types
and the empty type; see \agdalink{Semantics.CwF} for the other type formers.

\begin{defn} 
  \label{def:eqt}
  (See \agdadef{Semantics.CwF}{EqualityType}{EqualityType}.)
  Given $l:\NNO$, $(C,T):|\C\ltimes \Fam_l|$ and $t,t':
  \|\Elem_l\|(C,T)$, there is a family
  \begin{equation}
    \label{eq:83}
    \Eqt_l\,T\,t\,t' : \|\Fam_l\|\,C
  \end{equation}
  where, for each $c:|\E\fib C|$, we use the constructor
  \eqref{eq:22} in case $l=0$ and \eqref{eq:50} in case $l$ is a successor:
  \[
    \begin{array}{l}
      \|\Eqt_0\,T\,t\,t'\|\,c
      = \Eq_0(\|T\|\,c)(\|t\|\,c)(\|t'\|\,c)\\
      \|\Eqt_{1{+}\_}\,T\,t\,t'\|\,c
      = \Eq_+(\|T\|\,c)(\|t\|\,c)(\|t'\|\,c)\\ 
    \end{array}
  \]
  There is an element for reflexivity
  \begin{equation}
    \label{eq:112}
    \rflt_l\,T\,t :\|\Elem_l\|(C, \Eqt_l\,T\,t\,t)
  \end{equation}
  with $\|\rflt_l\,T\,t \|\,c$ given by $\hrfl(\|T\|\,c)(\|t\|\,c)$
  (using \eqref{eq:36} and \eqref{eq:75}). Furthermore the equality
  family satisfies reflection
  \begin{equation}
    \label{eq:118}
    \reflect_l : \|\Elem_l\|(C, \Eqt_l\,T\,t\,t') \fun (\Elem_l\fib(C,T)
    \ni t \eq t')
  \end{equation}
  Indeed, because of the way $\Eq_0$ and $\Eq_+$ are defined, both
  $\reflect_0$ and $\reflect_{1{+}\_}$ are the identity
  function. Finally, the equality family trivially satisfies uniqueness
  of identity proofs
  \begin{equation}
    \label{eq:119}
    \uip_l :(e\,e': \|\Elem_l\|(C, \Eqt_l\,T\,t\,t')) \fun
    (\Elem_l\fib (C, \Eqt_l\,T\,t\,t') \ni e \eq e')
  \end{equation}
  because, both in case $l=0$ or it is a successor,
  $(\Elem_l\fib (C, \Eqt_l\,T\,t\,t') \ni e \eq e')$ is the one-element
  type $\top$ (by virtue of \eqref{eq:43} and \eqref{eq:60}).
\end{defn}

\begin{defn}
  \label{def:emt}
  (See \agdadef{Semantics.CwF}{EmptyType}{EmptyType}.)
  Given $C:|\C|$ there is a family
  \begin{equation}
    \label{eq:140}
    \Empt : \|\Fam_0\|\,C
  \end{equation}
  given by the constant morphism $\E\fib C \morphism \U_0$ with value
  $\Emp:|\U_0|$ \eqref{eq:9}. Note that by virtue of \eqref{eq:34} and
  \eqref{eq:29}, for any $t: \|\Elem_0\|(C,\Empt)$, $\|t\|$ is a
  function of type $|\E\fib C| \fun \emptyset$ and so if $c: |\E\fib
  C|$ then $\|t\|\,c :\emptyset$ can be mapped to any type by the
  eliminator in \IRU{} for the empty type $\emptyset$. Consequently for
  any such $t$, given $l:\NNO$ and $T:\|\Fam_l\|\,C$ we get an
  element
  \begin{equation}
    \label{eq:141}
    \empt_l\,T\, t : \|\Elem_l\|(C,T)
  \end{equation}
  with $\|\empt_l\,T\, t\|\,c$ eliminating $\|t\|\,c :\emptyset$  to
  the type $\El_l (\|T\|\,c)$.
\end{defn}

\subsection{The semantic relations}
\label{sec:semr}

To prove (S1)--(S4) one could try to proceed by induction on the
structure of the proofs of the \ETU{} judgements. Since the three basic
forms of judgement \eqref{eq:84}--\eqref{eq:86} are mutually
dependent, one would have to construct the semantic functions in
(S1)--(S3) simultaneously, at the same time as building up proofs of
their proof-irrelevance property and the soundness property (S4). All
this ``would result in a more complicated proof''\footnote{I am
  quoting \citet[page~43]{HofmannM:synsdt} writing about a similar
  strategy for the semantics in the models considered in that survey
  article.}  than the one we are going to give; indeed it is
sufficiently complicated that after some attempts the author has no
concrete evidence that it actually exists.\footnote{To be convincing
  one wants a machine-checked formalization.} Instead, and like
\citet{HofmannM:synsdt}, we adopt the original strategy
of \citet{StreicherT:semtt} of first giving \emph{partial} semantic
functions on syntax that do not depend upon proofs of the judgements
and then showing their totality where proofs of the relevant
judgements do exist. Since we work entirely within \IRU{} there are
limited means for expressing partial functions: we have to define
type-valued input-output relations and then prove that they are not
only suitably total, but also single-valued (in the world of setoids).

The semantic relations for contexts and for terms that we use take the
form of datatypes
\begin{gather} 
  (\dencxr{\Gamma}C) : \Set\label{eq:63}\qquad\\
  (\dentmr{\Gamma\ent_l a}((C,T),t)) : \Set\label{eq:64} 
\end{gather}
where $\Gamma$ is an \ETU{} context, $a$ an \ETU{} term, $l:\NNO$ and
$((C,T),t) : |\C \ltimes \Fam_l \ltimes \Elem_l|$.  The semantic
relation for types is treated as a special case of that for terms
(making use of the identification of $\|\Fam_l\|\,C$ with
$\|\Elem_{1{+}l}\|(C,\univ_l)$):
\begin{equation}
  \label{eq:65}
  \dentyr{\Gamma\ent_l A}(C,T) = \dentmr{\Gamma\ent_l A}((C ,
  \univ_l), T)
\end{equation}
The datatype \eqref{eq:63} has two constructors, corresponding to the
two ways of building contexts, \eqref{eq:99} and \eqref{eq:100}; their
types are:
\begin{gather}
  \dencxr{\ETUECx}\Unit\label{eq:66}\\
  \dentyr{\Gamma\ent_l A}(C,T) \fun
  (x\freshfor \Gamma) \fun \dencxr{\Gamma\cons{x:_lA}}{C \ltimes_l
    T} \label{eq:80} 
\end{gather}
(the first uses the constructor $\Unit$ \eqref{eq:70}; the second uses
a freshness relation $x\freshfor \Gamma$ (expressing that $x$ does not
occur in $\Gamma$) and the comprehension construct $\ltimes_l$
mentioned at the end of Definition~\ref{def:fam}).

\begin{figure}
  \renewcommand{\arraystretch}{.8}
  \begin{mathpar}
    \begin{array}[b]{l}
      \den{\ETUEq} :\\
      \quad\{l:\NNO\}\\
      \quad\{A\,a\,a' : \TM\}\\
      \quad\{C:|\C|\}\\
      \quad\{T:\|\Fam_l\|\,C\}\\
      \quad\{t\,t': \|\Elem\_l\|(C,T)\}\\
      \quad(\_: \dentyr{\Gamma\ent_l A}(C,T))\\
      \quad(\_:\dentmr{\Gamma\ent_l a}((C,T),t))\\
      \quad(\_:\dentmr{\Gamma\ent_l a'}((C,T),t'))\\
      \quad{\fun}\;\text{{-}\,{-}\,{-}\,{-}\,{-}\,{-}\,{-}%
      \,{-}\,{-}\,{-}\,{-}\,{-}\,{-}\,{-}\,{-}\,{-}\,{-}%
      \,{-}\,{-}\,{-}\,{-}\,{-}}\\
      \quad\dentyr{\Gamma\ent_l \ETUEq_A\,a\,a'}(C, \Eqt_l\,T\,t\,t')
    \end{array}
    \and
    \begin{array}[b]{l}
      \den{\ETUrefl} :\\
      \quad\{l:\NNO\}\\
      \quad\{A\,a: \TM\}\\
      \quad\{C:|\C|\}\\
      \quad\{T:\|\Fam_l\|\,C\}\\
      \quad\{t : \|\Elem\_l\|(C,T)\}\\
      \quad(\_ : \dentyr{\Gamma\ent_l A}(C,T))\\
      \quad(\_ :\dentmr{\Gamma\ent_l a}((C,T),t))\\
      \quad{\fun}\;\text{{-}\,{-}\,{-}\,{-}\,{-}\,{-}%
      \,{-}\,{-}\,{-}\,{-}\,{-}\,{-}\,{-}\,{-}\,{-}%
      \,{-}\,{-}\,{-}\,{-}\,{-}\,{-}\,{-}\,{-}\,{-}%
      \,{-}\,{-}\,{-}\,{-}\,{-}}\\
      \quad\dentmr{\Gamma\ent_l \ETUrefl_A\,a}((C, \Eqt_l\,T\,t\,t), \rflt_l\,T\,t)
    \end{array}
    \\
  \end{mathpar}

  \caption{Semantic relation constructors for equality types
    and reflexivity terms}
  \label{fig:consre}
\end{figure}

\begin{figure}
  \renewcommand{\arraystretch}{.8}
  \begin{mathpar}
    \begin{array}[b]{l}
      \den{\ETUEmp} :\\
      \quad\{C:|\C|\}\\
      \quad(\_ : \dencxr{\Gamma} C)\\
      \quad{\fun}\;\text{{-}\,{-}\,{-}\,{-}\,{-}\,{-}%
      \,{-}\,{-}\,{-}\,{-}\,{-}\,{-}\,{-}\,{-}\,{-}\,{-}}\\
      \quad\dentyr{\Gamma\ent_0 \ETUEmp}(C, \Empt)
    \end{array}
    \and
    \begin{array}[b]{l}
      \den{\ETUemp} :\\
      \quad\{l:\NNO\}\\
      \quad\{a\,B: \TM\}\\
      \quad\{C:|\C|\}\\
      \quad\{T:\|\Fam_l\|\,C\}\\
      \quad\{t : \|\Elem_0\|(C, \Empt)\}\\
      \quad(\_ : \dentyr{\Gamma \ent_l B}(C , T))\\
      \quad(\_ : \dentmr{\Gamma\ent_0 a}((C,\Empt),t))\\
      \quad{\fun}\;\text{{-}\,{-}\,{-}\,{-}\,{-}\,{-}\,{-}%
      \,{-}\,{-}\,{-}\,{-}\,{-}\,{-}\,{-}\,{-}\,{-}\,{-}%
      \,{-}\,{-}\,{-}\,{-}\,{-}\,{-}\,{-}\,{-}}\\
      \quad\dentmr{\Gamma\ent_l \ETUemp_B\,a}((C, T),\empt_l\,T\,t)
    \end{array}
    \\
  \end{mathpar}
  \caption{Semantic relation constructors for the empty type
    and its eliminator}
  \label{fig:semrce}
\end{figure}

The datatype \eqref{eq:64} has a constructor for each of the ways
\eqref{eq:87}--\eqref{eq:101} of building \ETU{} terms.
Note that $\Pi$-types in \ETU{} obey
Agda-style universe levels: if $\Gamma\ent A:\ETUU_l$ and
$\Gamma\cons{x:_l A} \ent B[x ] : \ETUU_{l'}$, then
$\Gamma\ent \ETUPi_{l,l'}\,A\,B : \ETUU_{\max\,l\,l'}$. The semantics
of $\ETUPi_{l,l'}\,A\,B$ uses the lifting operation \eqref{eq:49} that
is part of the definition of the setoid universes (see
\agdalink{Setoid.Lift}) to move the semantics of $A$ and $B$ to the
common level $\max\,l\,l'$ before applying the $\Pit_+$ constructor
\eqref{eq:51}. We refer the reader to \agdalink{Semantics.Relation}
for the full details. Figure~\ref{fig:consre} gives the constructors
for equality types and reflexivity terms using
Definition~\ref{def:eqt}; and Figure~\ref{fig:semrce} gives the
constructors for the empty type and its eliminator terms using
Definition~\ref{def:emt}. 

We need \eqref{eq:64} to respect the equivalence of the setoid
$\C \ltimes \Fam_l \ltimes \Elem_l$
\begin{equation}
  \label{eq:82}
  \dentmr{\Gamma\ent_l a}x \fun (\C \ltimes \Fam_l \ltimes \Elem_l
  \ni x \eq x') \fun \dentmr{\Gamma\ent_l a}x'
\end{equation}
This property could be guaranteed by building a version of it into the
definition of each constructor. However, we found it easier to use
simpler definitions of the constructors that do not necessarily
respect setoid equivalence and include property \eqref{eq:82} as a
further constructor of the datatype. (A similar respectfulness property
for the datatype \eqref{eq:63} holds automatically without having to
give an explicit constructor for it.)

\begin{thm}
  \label{thm:sv}
  The semantic relations are single-valued:
  \begin{gather}
    \dencxr{\Gamma}x \fun \dencxr{\Gamma}x' \fun (\C \ni x \eq
    x') \label{eq:120}\\
    \dentyr{\Gamma\ent_l A}x \fun \dentyr{\Gamma\ent_l A}x' \fun
    (\C\ltimes\Fam_l \ni x \eq x')\label{eq:136}\\
    \dentmr{\Gamma\ent_l a}x \fun \dentmr{\Gamma\ent_l a}x' \fun
    (\C\ltimes\Fam_l\ltimes\Elem_l \ni x \eq x')\label{eq:121}
  \end{gather}
\end{thm}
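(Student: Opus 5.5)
The proof will go by induction on the derivations of the semantic relations, done simultaneously for contexts and terms. Since \eqref{eq:136} is the special case of \eqref{eq:121} obtained via \eqref{eq:65}, it needs no separate treatment. One step has to come first, however: \eqref{eq:121} produces an equivalence in $\C\ltimes\Fam_{1{+}l}\ltimes\Elem_{1{+}l}$ between $((C,\univ_l),T)$ and $((C',\univ_l),T')$. From this I will project out the $\C$- and $\Elem$-components, and then use the definitional equality $\|\Fam_l\|\,C=\|\Elem_{1{+}l}\|(C,\univ_l)$ together with \eqref{eq:28} and \eqref{eq:30} to read the result as an equivalence in $\C\ltimes\Fam_l$. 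So I will prove \eqref{eq:120} and \eqref{eq:121} together, by induction on the first proof argument, with an inner case analysis on the second.

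For contexts the argument is direct. For $\dencxr{\ETUECx}$ both sides are $\Unit$, and $\Unit\eq^\c\Unit$ is $\top$. For an extended context $\Gamma\cons{x:_lA}$, inversion on the second proof gives $\dentyr{\Gamma\ent_lA}(C',T')$. The induction hypothesis for \eqref{eq:136} then yields $C\eq^\c C'$ together with the family equivalence \eqref{eq:28}, and this is exactly what \eqref{eq:77} requires for $\Sigt\,C\,l\,\|T\|\,\ldots$, taking $p=\refl$ so that $\mathtt{subst}$ reduces. For terms, the difficulty is the extra respectfulness constructor \eqref{eq:82}. It can occur on either side, so I will first strip it off. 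If the first proof ends in \eqref{eq:82} with $x''\eq x$, I apply the induction hypothesis to its premise and compose using $\trs$ and $\sym$ in the comprehension setoid $\C\ltimes\Fam_l\ltimes\Elem_l$ (Definition~\ref{def:com}). To handle the second proof I will state an auxiliary lemma: for each syntactic form, if $\dentmr{\Gamma\ent_la}x'$ holds, then either it comes from the structural constructor for that form or it is equivalent to such an instance. This lemma is proved by induction on the second derivation, collapsing chains of \eqref{eq:82} by transitivity. After that reduction both derivations end in the same syntactic constructor, because the datatype is indexed by the term $a$.

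What remains is the congruence cases, one per term former. In each I apply the induction hypotheses to the subderivations, which gives equivalences of the semantic components, and then show that the corresponding semantic operation preserves those equivalences. For $\den{\ETUEq}$ this reduces to \eqref{eq:40} applied pointwise, together with the fact that $\Eqt_l$ respects equivalent $T,t,t'$. For $\den{\ETUrefl}$, \eqref{eq:43} makes the element component trivial. For $\den{\ETUemp}$ the element component is vacuous, because every $c$ yields an inhabitant of $\emptyset$. For variables, I will use a lookup lemma: the semantics of $x$ is the projection obtained from the context code, and single-valuedness follows from \eqref{eq:120}, with the freshness hypothesis ensuring that the variable is looked up at a unique position. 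For binders ($\ETUPi$, $\ETUlam$, $\ETUnrec$), the subderivations live in extended contexts $\Gamma\cons{x:_lA}$ with cofinitely quantified $x$. There I will choose $x$ fresh for the union of both finite sets, apply the induction hypothesis, and reassemble the result using the $\Pit_+$ clause \eqref{eq:57}, via the lifting setoid morphisms of \agdalink{Setoid.Lift}, and the $\Sigma$-type structure of $\E$.

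I expect the main obstacle to be these binder cases together with the universe-level bookkeeping. The induction hypothesis is only available at points $c,c'$ of the extended context, related by $\heq^\c$ with the $\mathtt{subst}$ along $l\equiv l'$. The semantic constructions, however, must be compared as functions of $(c,a)$. Lifting to level $\max\,l\,l'$ has to be shown to preserve both $\eq$ and $\heq$, and $\ETUnrec$ additionally needs an $\NNO$-induction showing that the recursively defined section respects pointwise equivalences. My first attempt is to formulate generic lemmas in \agdalink{Semantics.CwF}, stating that each semantic type or term former is a congruence for the $\heq$ of $\Fam_l$ and $\Elem_l$. With those in hand, every case of the main induction becomes a one-line application.
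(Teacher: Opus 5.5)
Your proposal is correct and is essentially the paper's proof: a simultaneous structural induction over the datatypes $\dencxr{\Gamma}x$ and $\dentmr{\Gamma\ent_l a}x$, with \eqref{eq:136} obtained from the term case via \eqref{eq:65}. Your handling of the respectfulness constructor \eqref{eq:82} and of the binder cases fills in details that the paper leaves to the Agda development. The paper singles out one fact, that \ETU{} terms carry explicit universe-level and type annotations, and it is exactly what justifies your step that both derivations end in the same constructor with subderivations at the same levels, so that the induction hypothesis applies to them.
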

\begin{proof}
  See \agdalink{Semantics.SingleValued}.  \eqref{eq:136} follows from
  \eqref{eq:120} and \eqref{eq:121}; and those two can be proved
  simultaneously by structural induction for the datatypes
  $\dencxr{\Gamma}x$ and $\dentmr{\Gamma\ent_l a}x$.  The proof 
  makes use of the fact that \ETU{} terms are decorated with explicit
  universe-level and type information.
\end{proof}

\subsection{Semantics of substitution}
\label{sec:sems}

Recall from the end of Section~\ref{sec:extu} that we take
term-for-variable substitutions in \ETU{} just to be functions
$\Atom\fun\TM$ and define judgements \eqref{eq:103} and
\eqref{eq:104} about them in terms of the basic judgements. To model
those judgements we use a notion of morphism between context codes in
$\C$. There is a displayed setoid $\Hom : \Setd[ \C \otimes \C ]$
given as follows (see \agdadef{Semantics.CwF}{Hom}{Hom}):
\begin{equation}
  \label{eq:124}
  \begin{aligned}
    &\|\Hom\|\,(C , D) = | \E\fib C \morphism \E\fib D |\\
    &(\Hom \ni (C, D), f\heq (C' , D'), f') = {}\\
    &\qquad\forall c\,c' \fun (\E\ni C,c\heq C', c') \fun (\E\ni D,
    |f|\,c \heq D', |f'|\,c') 
  \end{aligned}
\end{equation}
In Definition~\ref{def:rei} we noted that sections of displayed
setoids can be re-indexed along setoid morphisms. Since families
\eqref{eq:117} and their elements \eqref{eq:29} are defined in terms
of sections, we get induced operations for re-indexing a family
$T:\|\Fam_l\|\,C$ or an element $t:\|\Elem_l\|(C,T)$ along a morphism
$f:\|\Hom\|(D,C)$ (see
\agdadef{Semantics.CwF}{ReIndexFam}{ReIndexFam}):
\begin{equation}
  \label{eq:128}
  \begin{aligned}
    &f \ast T : \|\Fam_l\|\,D\\
    &f\ast_T t : \|\Elem_l\|(D, f\ast T)
  \end{aligned}
\end{equation}
The semantic relation for substitutions takes the form of a datatype in
$\Set$ 
\begin{equation}
  \label{eq:125}
  \densbr{\Delta \ent \sigma : \Gamma}((D,C),f)
\end{equation}
where $\Delta$ and $\Gamma$ are \ETU{} contexts, $\sigma:\Atom\fun\TM$
and $((D,C),f) : |(\C\otimes \C)\ltimes \Hom|$. It has constructors
for the two ways of constructing a substitution judgement, plus a
constructor ensuring that \eqref{eq:125} respects the equivalence of
the setoid $(\C\otimes \C)\ltimes \Hom$; see
\agdalink{Semantics.Substitution}.

\begin{lem}
  \label{lem:sb}
  The operation of applying a substitution $\sigma:\Atom\fun\TM$ to an
  \ETU{} term $a$ to obtain another term $\sigma\ast a$ has a semantics
  given by the re-indexing operations \eqref{eq:128}, in the sense
  that there is a function
  \[
    \densbr{\Delta \ent \sigma : \Gamma}((D,C),f) \;\fun\;
    \dentmr{\Gamma\ent_l a}((C,T),t) \;\fun\;
    \dentmr{\Delta\ent_l \sigma \ast a}((D,f\ast T),f\ast_T t)
  \]
\end{lem}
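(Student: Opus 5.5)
We prove the lemma by structural induction on the derivation of $\dentmr{\Gamma\ent_l a}((C,T),t)$, i.e.\ by recursion on elements of the datatype \eqref{eq:64}. The substitution premise $s:\densbr{\Delta \ent \sigma : \Gamma}((D,C),f)$ stays fixed for the outer constructors, but it has to be generalised for binders (see below). Because the proof is by induction on the semantic relation, we never need the syntactic judgements. The only syntactic facts we use are the equational properties of $\sigma\ast{-}$ from \agdalink{WSLN.Sig.Substitution}: it commutes with each term constructor, and it commutes with concretion, $(\sigma\ast b)[\sigma\ast u]=\sigma\ast(b[u])$, provided the bound variable is fresh for $\sigma$. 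The semantic facts we use are that re-indexing \eqref{eq:128} commutes with the semantic type formers up to definitional equality. For example, $f\ast\Eqt_l\,T\,t\,t' = \Eqt_l\,(f\ast T)\,(f\ast_T t)\,(f\ast_T t')$, $f\ast \rflt_l\,T\,t = \rflt_l\,(f\ast T)(f\ast_T t)$ and $f\ast\Empt=\Empt$. These hold because $\|f\ast T\|=\|T\|\comp|f|$ and all the semantic operations are defined pointwise in $c$. With these in hand, the cases for $\den{\ETUEq}$, $\den{\ETUrefl}$, $\den{\ETUEmp}$, $\den{\ETUemp}$, $\ETUNat$, $\ETUzero$, $\ETUsucc$ and $\ETUU_l$ are immediate: apply the induction hypotheses and then the corresponding constructor. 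For $\ETUU_l$ and $\ETUEmp$ the premise is a context relation $\dencxr{\Gamma}C$, so we also need the easy auxiliary fact that $s$ yields $\dencxr{\Delta}D$. We expect to get this by induction on $s$.

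The respectfulness constructor \eqref{eq:82} is handled using the corresponding respectfulness constructor of \eqref{eq:64} at $\Delta$. For this we need re-indexing along $f$ to be a setoid morphism, sending $\C\ltimes\Fam_l\ltimes\Elem_l$-equivalences over $C$ to equivalences over $D$. This follows from the form of \eqref{eq:28} and \eqref{eq:30} together with $\hcng$ for $f$. If the equivalence moves $C$ to some $C'$, we also need to transport $s$ to $C'$. This is exactly what the respectfulness constructor of \eqref{eq:125} provides once we pair it with the identity equivalence on $D$ and $f$.

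The variable case $x$ is where the form of \eqref{eq:125} matters. We will do an inner induction on $s$. For an extension constructor $\densbr{\Delta\ent\sigma:\Gamma\cons{x:_lA}}$, where $\sigma\,x$ is interpreted together with the restriction of $f$, we look up $\sigma\,x$ when the variable matches. Otherwise we recurse into the shorter context, using weakening of the semantic relation for the projection.

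We expect the main obstacle to be the binding constructs: $\ETUPi$, $\ETUlam$, application, and $\ETUnrec$ with its $2$-term $b'$. Their semantic premises are stated for extended contexts $\Gamma\cons{x:_lA}$ with $x$ fresh, chosen cofinitely. To apply the induction hypothesis there, we must first build the lifted substitution $\sigma[x\mapsto x]$ into $\Delta\cons{x:_l\sigma\ast A}$, from $s$, interpreted by the lifted morphism $f\uparrow T : \|\Hom\|(D\ltimes_l f\ast T, C\ltimes_l T)$, $(d,y)\mapsto(|f|\,d,y)$. This needs a semantic weakening lemma, i.e.\ the analogue of \eqref{eq:142} for \eqref{eq:64} and \eqref{eq:125}, proved first by a similar induction. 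We also need to check that $f\uparrow T$ composed with the re-indexings agrees definitionally with re-indexing the bodies, so that the output indices are exactly $(D,f\ast T)$ and $f\ast_T t$. Where this agreement holds only up to setoid equivalence (for instance through $\Lft$ in the semantics of $\ETUPi_{l,l'}$), we correct it using the respectfulness constructor \eqref{eq:82} as above. Finally, freshness bookkeeping is done by choosing $x$ outside the support of $\sigma$, $\Delta$ and the cofinite set $S$, following \agdalink{ETU.ExistsFresh}.
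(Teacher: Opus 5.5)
Your plan agrees with the paper's proof, which defers to \agdalink{Semantics.Substitution} and says only that it relies on a previously defined semantics of weakening (\agdalink{Semantics.Weakening}); that is the ingredient you use to lift $\sigma$ under binders, alongside induction on the semantic term relation and the respectfulness constructors to absorb index mismatches. Two small repairs: transporting $s$ to an equivalent context code $C'$ needs $\coe\,\Hom$ and $\coh\,\Hom$ to produce a morphism in $\|\Hom\|(D,C')$, because the original $f$ does not have that type; and fresh names should avoid only the variables of the finitely many $\sigma\,y$ with $y$ in $\Gamma$, since the full support of an arbitrary $\sigma:\Atom\fun\TM$ may be infinite.
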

\begin{proof}
  See \agdalink{Semantics.Substitution}. (The proof uses a semantics
  for the operation of weakening the context of a judgement, defined
  in \agdalink{Semantics.Weakening}.)
\end{proof}

\subsection{Totality and soundness}
\label{sec:tots}

The totality of the semantic relations for \ETU{} contexts and terms can
be expressed as the existence of the following functions:
\begin{gather}
  \totCx : (\Ok\,\Gamma)\fun
  \textstyle\sum[ C\in |\C|]\,\left(\dencxr{\Gamma}C\right)
  \label{eq:130}\\
  \totTy : (\Gamma\ent A: \ETUU_l) \fun
    \textstyle\sum[ (C,T)\in
    |\C\ltimes\Fam_l|]\,\left(\dentyr{\Gamma\ent_lA}(C,T)\right)
  \label{eq:131}\\ 
  \totTm : (\Gamma\ent a:_l A) \fun
  \begin{array}[t]{@{}l}
    \textstyle\sum[ ((C,T),t) \in |\C\ltimes\Fam_l\ltimes\Elem_l|]\\
    \left(\dentyr{\Gamma\ent_l A}(C,T)\right) \;\times\;
    \left(\dentmr{\Gamma\ent_l a}((C,T),t)\right)
  \end{array}
  \label{eq:132}
\end{gather}
The soundness of the semantic relations for \ETU{} definitional
equality can be expressed as the existence of the following functions:
\begin{gather}
  \soundTy : (\Gamma\ent A = A': \ETUU_l) \fun
  \begin{array}[t]{@{}l}
    \textstyle\sum[ (C,T) \in |\C\ltimes\Fam_l|]\\
    \left(\dentyr{\Gamma\ent_lA}(C,T)\right) \;\times\;
    \left(\dentyr{\Gamma\ent_lA'}(C,T)\right)
  \end{array}\label{eq:133}\\
  \soundTm : (\Gamma\ent a = a' :_l A) \fun
  \begin{array}[t]{@{}l}
    \textstyle\sum[ ((C,T),t) \in |\C\ltimes\Fam_l\ltimes\Elem_l|]\\
    \left(\dentyr{\Gamma\ent_l A}(C,T)\right) \;\times\;\\
    \left(\dentmr{\Gamma\ent_l a}((C,T),t) \right) \;\times\;\\
    \left(\dentmr{\Gamma\ent_l a'}((C,T),t) \right)
  \end{array}\label{eq:134}
\end{gather}

\begin{thm}
  \label{thm:tots}
  The terms \eqref{eq:130}--\eqref{eq:134} exist  in \IRU. 
\end{thm}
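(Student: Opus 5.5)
The plan is to define all five functions simultaneously by structural induction on the inductively defined judgement datatypes $\Ok\,\Gamma$, $\Gamma\ent a:_l A$ and $\Gamma\ent a=a':_l A$. These datatypes are mutually defined, so $\totCx$, $\totTy$, $\totTm$, $\soundTy$ and $\soundTm$ must also be mutually recursive. Since types are terms, $\totTy$ can be obtained from $\totTm$ applied to a proof of $\Gamma\ent A:_{1{+}l}\ETUU_l$, together with a proof that $\Gamma\ent\ETUU_l$ is interpreted by $\univ_l$; the same reduction gives $\soundTy$ from $\soundTm$.

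For each typing rule, the inductive hypotheses supply witnesses for the premises. We then apply the matching constructor of the semantic relation, for example $\den{\ETUEq}$ or $\den{\ETUemp}$ from Figures~\ref{fig:consre} and~\ref{fig:semrce}, or the context constructors \eqref{eq:66} and \eqref{eq:80}. The obstacle is that the premises are interpreted separately, so the returned semantic objects need not share their context and type components. For instance, in ${\ent}\ETUEq$ the interpretations of $a$, $b$ and $A$ come with codes $(C_1,T_1)$, $(C_2,T_2)$ and $(C_3,T_3)$. Single-valuedness (Theorem~\ref{thm:sv}) gives setoid equivalences between these codes. We then transport elements along them using $\coe$ for $\Elem_l$ and $\Fam_l$, which is where the $\coe$/$\coh$ structure of the displayed setoids is needed. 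The respectfulness constructor \eqref{eq:82} moves each witness $\dentmr{\Gamma\ent_l a}$ to the common transported index before the constructor is applied. The helper hypotheses in the rules make the type interpretation available at every step, so the $\dentyr{\Gamma\ent_l A}$ component of $\totTm$ can be produced directly.

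Rules with binders, such as $\Pi$, $\lambda$ and $\ETUnrec$, quantify cofinitely over fresh variables. For these I would use the finitary versions from \agdalink{ETU.ExistsFresh}: pick a concrete fresh $x$ and apply the induction hypothesis at $\Gamma\cons{x:_lA}$. The resulting context code has the form $C\ltimes_l T$ by \eqref{eq:80}, and it matches, up to the equivalence given by \eqref{eq:120}, the code arising from the type premise. We then need the bodies' interpretations to be independent of the chosen fresh name, which should follow from equivariance of the semantic relation or from a renaming instance of Lemma~\ref{lem:sb}.

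The soundness clauses have the following shape. Reflexivity, symmetry and transitivity of $=$ follow from single-valuedness and the setoid structure. Congruence rules are handled like the corresponding totality clauses. $\beta$-rules such as $\ETUnrec\mathtt{Beta}_+$ and the $\Pi$-$\beta$ rule need Lemma~\ref{lem:sb}, because their right-hand sides contain concretions $b'[a][\ldots]$. For this we build a semantic substitution witness $\densbr{\Gamma\ent\sigma:\Gamma\cons{x:_lA}}$ from the interpretation of $a$, and apply the lemma to identify the semantics of the concretion with $f\ast_T t$. Equality reflection $\mathtt{Reflect}$ uses \eqref{eq:118}, $\mathtt{UIP}$ uses \eqref{eq:119}, and $\eta$ holds up to setoid equivalence, so it is finished off with \eqref{eq:82}. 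The conversion rule \eqref{eq:144} uses $\soundTy$ on $T=T'$ together with $\coe$ and $\coh$ of $\Elem_l$.

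I expect the main obstacle to be this bookkeeping of reindexing along setoid equivalences. It is needed both to make the mutual recursion pass Agda's termination checker and to align indices for the $\beta$/substitution cases via Lemma~\ref{lem:sb}. My first attempt would be to package a general ``realign'' lemma, built from Theorem~\ref{thm:sv}, $\coe$/$\coh$ and \eqref{eq:82}, and use it uniformly in every case.
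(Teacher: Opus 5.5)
Your plan is essentially the paper's proof. The paper defines $\totCx$, $\totTm$ and $\soundTm$ by simultaneous induction on derivations, obtains $\totTy$ and $\soundTy$ from them, and uses Theorem~\ref{thm:sv}, Lemma~\ref{lem:sb} and a realignment device. Your ``realign'' lemma is essentially the paper's conditional versions $\totTy'$, $\totTm'$, $\soundTy'$, $\soundTm'$.

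One correction is needed. In rules such as ${\ent}\ETUnrec$ (and application), the conclusion type is a concretion like $B[a]$, and the helper hypotheses do not supply its interpretation directly. So Lemma~\ref{lem:sb} is already needed in the totality clauses, not only for the $\beta$-rules.
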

\begin{proof}
  The terms $\totTy$ and $\soundTy$ can be constructed from the other
  three terms (reflecting the fact that typing judgements in \ETU{}
  are defined in terms of the basic judgements for contexts and
  terms). The terms $\totCx$, $\totTm$ and $\soundTm$ have to be
  defined simultaneously, by induction on the proofs of \ETU{}
  judgements. The details of the proof are complicated and can be
  found in \agdalink{Semantics.Total}. They use the single-valuedness
  property (Theorem~\ref{thm:sv}) and substitution property
  (Lemma~\ref{lem:sb}), together with the following conditional
  versions which are definable from
  \eqref{eq:130}--\eqref{eq:134}:
  \begin{gather}
    \totTy' : (\Gamma\ent A: \ETUU_l) \fun \left(\dencxr{\Gamma}C\right)
    \fun{}\notag\\ 
    \qquad\textstyle
    \sum[T \in \|\Fam_l\|\,C]\,\left(\dentyr{\Gamma\ent_lA}(C,T)\right)
    \label{eq:122}\\
    \totTm' : (\Gamma\ent a:_l A) \fun
    \left(\dentyr{\Gamma\ent_l A}(C,T)\right)
    \fun{}\notag\\
    \qquad\textstyle\sum[ t \in \|\Elem_l\|(C,T)]\,
    \left(\dentmr{\Gamma\ent_l a}((C,T),t)\right) 
    \label{eq:123}\\
    \soundTy' (\Gamma\ent A = A': \ETUU_l) \fun
    \left(\dencxr{\Gamma}C\right)
    \fun{}\notag\\ 
    \qquad\textstyle\sum[T \in
    \|\Fam_l\|\,C]\,\left(\dentyr{\Gamma\ent_lA}(C,T)\right)  \;\times\;
    \left(\dentyr{\Gamma\ent_lA'}(C,T)\right)
    \label{eq:129}\\
    \soundTm' : (\Gamma\ent a = a':_l A) \fun
    \left(\dentyr{\Gamma\ent_l A}(C,T)\right) 
    \fun{}\notag\\
    \qquad\textstyle\sum[ t \in \|\Elem_l\|(C,T)]\,
    \left(\dentmr{\Gamma\ent_l a}((C,T),t)\right) \;\times\;
    \left(\dentmr{\Gamma\ent_l
      a'}((C,T),t)\right)
    \label{eq:135}
  \end{gather}
\end{proof}

\subsection{The semantic functions}
\label{sec:semf}

We can now construct the functions described in (S1)--(S4) and prove
their irrelevancy and soundness properties (see
\agdalink{Semantics.Function}).

For (S1) we can take
\begin{equation}
  \label{eq:137}
  \dencx{\Gamma}_p = \pi_1 (\totCx\,p)
\end{equation}

Thus $\pi_2(\totCx\,p) : \dencxr{\Gamma}{\dencx{\Gamma}_p}$; so if we
also have $p':(\Ok\,\Gamma)$, then we can apply Theorem~\ref{thm:sv}
to conclude that $\C\ni \den{\Gamma}_p \eq \den{\Gamma}_{p'}$.

For (S2) we can take
\begin{equation}
  \label{eq:138}
  \denty{\Gamma\ent_l A}_{p,q} = 
  \pi_1(\totTy'\,q\,(\pi_2(\totCx\,p)))
\end{equation}
Thus $\pi_2(\totTy'\,q\,(\pi_2(\totCx\,p)))$ is a term of type
$\dentyr{\Gamma\ent_l A}(\dencx{\Gamma}_p, \denty{\Gamma\ent_l
  A}_{p,q})$; so if we also have $p':(\Ok\,\Gamma)$ and
$q':(\Gamma\ent A: \ETUU_l)$, then we can apply Theorem~\ref{thm:sv}
to conclude that
$\Fam_l\ni \dencx{\Gamma}_p\mathrel{,} \denty{\Gamma\ent_l A}_{p,q}
\heq \dencx{\Gamma}_{p'}\mathrel{,} \denty{\Gamma\ent_l A}_{p',q'}$.

For (S3) we can take
\begin{equation}
  \label{eq:139}
  \dentm{\Gamma\ent_l a}_{p,q,r} = 
  \pi_1(\totTm'\,r\,(\pi_2(\totTy'\,q\,(\pi_2(\totCx\,p)))))
\end{equation}
Thus
\begin{multline*}
  \pi_2(\totTm'\,r\,(\pi_2(\totTy'\,q\,(\pi_2(\totCx\,p))))) :\\
  \dentmr{\Gamma\ent_l a}((\dencx{\Gamma}_p, \denty{\Gamma\ent_l
    A}_{p,q}), \dentm{\Gamma\ent_l a}_{p,q,r})
\end{multline*}
So if we also have
 $p':(\Ok\,\Gamma)$,
$q':(\Gamma\ent A: \ETUU_l)$ and $r':(\Gamma\ent a :_l A)$, then we
can apply Theorem~\ref{thm:sv}
to conclude that
\[
  \Elem_l\ni
  (\dencx{\Gamma}_p\mathrel{,} \denty{\Gamma\ent_l
\qed    A}_{p,q}) \mathrel{,} \dentm{\Gamma\ent_l a}_{p,q,r} 
  \heq (\dencx{\Gamma}_{p'}\mathrel{,} \denty{\Gamma\ent_l A}_{p',q'})
  \mathrel{,} \dentm{\Gamma\ent_l a}_{p',q',r'}
\]

For (S4), suppose we have  $p:(\Ok\,\Gamma)$, $q:(\Gamma\ent A :\ETUU_l)$,
$r : (\Gamma \ent a :_l A)$, $r': (\Gamma \ent a' :_l A)$ and
$s:(\Gamma\ent a = a' :_l A)$. Let
\[
  \begin{aligned}[t]
    (C,p_0) &= \totCx\,p\\
    (T, q_0) &= \totTy'\,q\,p_0\\
    (t , r_0) &= \totTm'\,r\,q_0\\
    (t' , r_0') &= \totTm'\,r'\,q_0\\
    (t'' , r_1 , r_1') &= \soundTm'\,s\,q_0
  \end{aligned}
  \text{hence}\quad
  \begin{aligned}[t]
    C &= \dencx{\Gamma}_p\\
    T &= \denty{\Gamma\ent_l A}_{p,q}\\
    t &= \dentm{\Gamma\ent_l a}_{p,q,r}\\
    t' &= \dentm{\Gamma\ent_l a'}_{p,q,r'}
  \end{aligned}
  \text{and}\quad
  \begin{aligned}[t]
    r_0 &: \dentmr{\Gamma\ent_l a}((C,T),t)\\
    r_0' &: \dentmr{\Gamma\ent_l a'}((C,T),t')\\
    r_1 &: \dentmr{\Gamma\ent_l a}((C,T),t'')\\
    r_1' &: \dentmr{\Gamma\ent_l a'}((C,T),t'')
  \end{aligned}
\]
So we can apply Theorem~\ref{thm:sv} first to $r_0$ and $r_1$ to
conclude that $\Elem_l\fib(C,T)\ni t \eq t''$; and then apply it to
$r_1'$ and $r_0'$ to conclude that $\Elem_l\fib(C,T)\ni t'' \eq
t'$. So by transitivity $\Elem_l\fib(C,T)\ni t \eq t'$, that is,
$\Elem_l \fib (\dencx{\Gamma}_p\mathrel{,} \denty{\Gamma\ent_l
  A}_{p,q}) \ni \dentm{\Gamma\ent_l a}_{p,q,r} \eq \dentm{\Gamma\ent_l
  a'}_{p,q,r'}$, as required for the first part of (S4). The proof of
the second part is similar. \qed

\subsection{Consistency}
\label{sec:consis}

Were there to exist an \ETU{} term $a:\TM$ satisfying
$\ETUECx\ent a:_0 \ETUEmp$, then the rules in Figure~\ref{fig:rulee}
together with weakening \eqref{eq:142} imply that every well-formed
\ETU{} type in context is inhabited; and hence by equality reflection,
that all terms of any type in context are definitionally equal. Thus
the \IRU{} type $\neg(\sum[ a \in \TM] (\ETUECx\ent a:_0 \ETUEmp))$
expresses the consistency of \ETU. Recall that provability of
judgements in extensional type theory is in general
undecidable~\citep[Theorem~3.2.1]{HofmannM:extcit}. So consistency
cannot be proved via some effective notion of normalization for which
the empty type has no normal forms. Instead one can proceed via a
semantics in a model for which the empty type is suitably empty. We
now have such a model in \IRU{} and so we get:

\begin{thm}
  \label{thm:consis}
  There is an \IRU{} term $\mathtt{consis_{\mathsf{ETU}}}$ of type
  $\neg(\sum[ a \in \TM] (\ETUECx\ent a:_0 \ETUEmp))$.
\end{thm}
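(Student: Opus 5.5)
The plan is to assume a pair $(a,r)$ with $r:(\ETUECx\ent a:_0\ETUEmp)$ and derive an element of $\emptyset$ by interpreting $a$ in the model. First, from $r$ obtain the helper judgement $\ETUECx\ent\ETUEmp:\ETUU_0$; either by the admissibility result of \agdalink{ETU.Admissible} (typing implies well-formedness of the type), or directly from ${\ent}\ETUEmp$ once we have $p:(\Ok\,\ETUECx)$, which is the base constructor for well-formed contexts. Then apply $\totTm$ from Theorem~\ref{thm:tots} to $r$. This yields $((C,T),t):|\C\ltimes\Fam_0\ltimes\Elem_0|$ together with $q_0:\dentyr{\ETUECx\ent_0\ETUEmp}(C,T)$ and $\dentmr{\ETUECx\ent_0 a}((C,T),t)$.

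Next, pin down $C$ and $T$ using single-valuedness. The constructor \eqref{eq:66} gives $\dencxr{\ETUECx}\Unit$, and then $\den{\ETUEmp}$ (Figure~\ref{fig:semrce}) gives $\dentyr{\ETUECx\ent_0\ETUEmp}(\Unit,\Empt)$. By \eqref{eq:136} of Theorem~\ref{thm:sv}, applied to $q_0$ and this term, we get $\C\ltimes\Fam_0\ni(C,T)\eq(\Unit,\Empt)$. That is, we get $e:\C\ni C\eq\Unit$ and $h:\Fam_0\ni C,T\heq\Unit,\Empt$.

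Now evaluate $t$ at a point of $\E\fib C$. Use $\coe\,\E\,(\sym\,\C\,e)\,\tt : \|\E\|\,C$ to get $c$, and note $\coh$ gives $\E\ni\Unit,\tt\heq C,c$ (after symmetry). Then $\|t\|\,c:\|\El_0\|(\|T\|\,c)$. By \eqref{eq:28}, applying $h$ to $c$, $\tt$ and the proof of $\E\ni C,c\heq\Unit,\tt$ yields $\U_0\ni\|T\|\,c\eq\Emp$. Apply $\coe\,\El_0$ along this to transport $\|t\|\,c$ to an element of $\|\El_0\|\,\Emp=\emptyset$ by \eqref{eq:34}, which completes the proof.

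The main obstacle I expect is this last bookkeeping, namely getting the orientation of $\hsym$/$\coh$ right and extracting a proof of $\E\ni C,c\heq\Unit,\tt$. If that becomes awkward, there is a simpler alternative. Since $\|E\|\,\Unit=\top$, take $c$ by coercion as above and use $\hsym\,\E$ with $e$. Alternatively, one can skip coercing the context altogether by pattern matching: $\C\ni C\eq\Unit$ is $\emptyset$ unless $C$ is $\Unit$, so matching on $C$ with $e$ makes $C$ definitionally $\Unit$ and lets us take $c=\tt$ directly. Likewise, matching on $\|T\|\,\tt$ against \eqref{eq:38}–\eqref{eq:37} shows that $\|T\|\,\tt\eq_0\Emp$ forces $\|T\|\,\tt$ to be $\Emp$. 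Then $\|t\|\,\tt:\emptyset$ immediately, with no $\coe$ needed.
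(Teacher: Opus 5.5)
Your argument is correct, but it reaches the contradiction by a different route from the paper. The paper fixes the canonical proofs $p:\Ok\,\ETUECx$ and $q={\ent}\ETUEmp\,p$ and forms $\dentm{\ETUECx\ent_0 a}_{p,q,r}$ via (S3). It then observes that, because $\totCx$ and $\totTy'$ compute on these constructor-form arguments, the underlying function of this element literally has type $\top\fun\emptyset$, so it can be applied directly to $\tt$. No single-valuedness or transport appears explicitly. You instead use only the \emph{types} of $\totTm$ \eqref{eq:132} and of single-valuedness \eqref{eq:136}. You compare the $q_0$ obtained from $\totTm\,r$ with $\den{\ETUEmp}$ applied to \eqref{eq:66}, which gives $\C\ltimes\Fam_0\ni(C,T)\eq(\Unit,\Empt)$. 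Then $\sym$, $\coe\,\E$, $\coh\,\E$, $\hsym\,\E$ and finally $\coe\,\El_0$ transport $\|t\|\,c$ into $\|\El_0\|\,\Emp=\emptyset$, and the orientations you give check out. The paper's proof is shorter, but it depends on the intensional reduction behaviour of the semantic functions. That can only be confirmed by inspecting how $\totCx$ and $\totTy'$ are implemented, e.g.\ that any coercion inside $\totTy'$ disappears on the constant family $\Empt$. Yours does not depend on those implementation details, at the cost of some bookkeeping. Your pattern-matching variant removes even that bookkeeping by exploiting that $\eq^\c$ and $\eq_0$ are $\emptyset$ off the diagonal. Note that it is this ``all other cases'' clause, not \eqref{eq:38}--\eqref{eq:37}, that does the work there. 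Incidentally, you never need the helper judgement $\ETUECx\ent\ETUEmp:\ETUU_0$, since $\totTm$ takes only $r$.
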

\begin{proof} (See \agdalink{Semantics.Consistency}.)
  By definition the type $\Ok\,\ETUECx$ contains an element in
  constructor form, call it $p$; and then from Figure~\ref{fig:rulee}
  we get $q : (\ETUECx\ent \ETUEmp : \ETUU_0)$ given by
  $q = {\ent}\ETUEmp\, p$.  So if $a: \TM$ and
  $r : (\ETUECx\ent a:_0 \ETUEmp)$, then from (S3) we get
  $\dentm{\ETUECx\ent_0 a}_{p,q,r}$ in $\|\Elem_0\|(\dencx{\ETUECx}_p
  \mathbin{,} \denty{\ETUECx\ent_0 \ETUEmp}_{p,q})$. Its underlying
  function $\|\dentm{\ETUECx\ent_0 a}_{p,q,r}\|$ has type
  $\top\fun\emptyset$ because of the way that $\dencx{\ETUECx}_p$ and 
  $\denty{\ETUECx\ent_0 \ETUEmp}_{p,q}$ are defined. Therefore we can
  define $\mathtt{consis_{\mathsf{ETU}}}$ to be the function
  $(\sum[ a \in \TM] (\ETUECx\ent a:_0 \ETUEmp)) \fun \emptyset$ that
  maps $(a,r)$ to $\|\dentm{\ETUECx\ent_0 a}_{p,q,r}\|\,\tt$.
\end{proof}

\section{Conclusion}
\label{sec:con}

Peering over the shoulders of Altenkirch, Hofmann, Palmgren and
others, we have shown that a certain notion of displayed setoid for
type-valued equivalence relations (Definition~\ref{def:dsetd}) works
well in intensional type theory, to the extent that it is possible to
use it to give an intensional semantics for extensional type theory
with universes (Section~\ref{sec:semetu}). Although the semantics
generally follows the pattern pioneered by \citet{StreicherT:semtt}
intensionality complicates matters and there are novel aspects.
Although the use of setoid universes of codes for types is not new,
the use of a setoid universe of codes for contexts
(section~\ref{sec:setuc}) does seem new; and we had to be careful to
ensure that families were the same thing as elements of the code for a
universe (section~\ref{sec:fame}).  As a consequence of the setoid
semantics we deduced the consistency of extensional type theory as a
theorem within the intensional type theory, using safe Agda to fully
formalize it.

Much has been written about the formalization of the syntax and
semantics of dependent type theories.  Traditional formalizations of
syntax, such as we used here, take an ``extrinsic'' approach, with
inductive datatypes of raw expressions and inductively defined
judgements about which expressions are well-formed; only a very simple
intensional type theory, like \IRU, is needed to express such a
formalization. By contrast, more recent ``intrinsic'' accounts need
only define well-formed expressions, as the elements of a initial
algebra for a suitable (second order) generalized algebraic
theory~\citep{CartmellJ:genatc,UemuraT:absctt}.  Category theory can be
applied to get properties of those initial algebras and thus one gets
elegant proofs of meta-mathematical properties of the intrinsic
version of dependent type
theories~\citep{CoquandT:canndt,SterlingJ:firsst}. To recover results
about the extrinsic version, one has to prove ``initiality
conjectures'' about the correspondence of the traditional syntax
modulo definitional equality with elements of the initial algebra.

Here we have proved results entirely within \IRU, partly from a
foundational motivation, and partly from a practical one since
intensional type theory is well supported and well used in
``engineering'' formal meta-theory'' \citep{PierceB:engfm}. From this
point of view there are two problems with the more abstract, intrinsic
approach sketched in the previous paragraph: it involves constructing
initial algebras and that involves taking quotients (or better, using
quotient inductive types~\citep{AltenkirchT:typtt}); and solving
initiality conjectures involves using extensionality (think how the
uniqueness part of initiality is going to be expressed). Although
intensional type theory has neither quotient types nor extensionality
built in, previous work leads us to expect that their effect can be
regained via the use of some kind of setoid. But what kind exactly?
--- realizing that expectation is not automatic. For example, can one
use setoids to get a fully formalized intensional version of the
category-theoretic glueing treatment of
normalization-by-evaluation~\citep{FioreMP:semanbe-jv,CoquandT:canndt}?
The answer is ``yes'' for simply typed systems like G\"odel's System
T, see \citet[Example~6.3]{PittsAM:welsln}; but that just needs
setoids rather than displayed setoids. The material presented in this
paper is a step towards being able to answer such questions in the
dependently typed case.

\paragraph*{Acknowledgements} I am grateful to Lo{\"i}c Pujet for
pointing me to his work \citep{PujetL:indus,PujetL:revhsm}; and to him
and David Berry for discussions.


\end{document}